\definecolor{DarkGreen}{rgb}{0.1,0.5,0.1}
\definecolor{DarkRed}{rgb}{0.5,0.1,0.1}
\definecolor{DarkBlue}{rgb}{0.1,0.1,0.5}
\theoremstyle{plain}
\newtheorem{theorem}{Theorem}[section]
\newtheorem{lemma}[theorem]{Lemma}
\newtheorem{proposition}[theorem]{Proposition}
\newtheorem{corollary}[theorem]{Corollary}
\newtheorem*{theorem*}{Theorem}
\theoremstyle{definition}
\newtheorem{definition}[theorem]{Definition}
\newtheorem{remark}[theorem]{Remark}
\newcommand{\floor}[1]{\lfloor {#1} \rfloor}
\newcommand{\ceil}[1]{\lceil {#1}\rceil}
\newcommand{\ind}[1]{^{(#1)}}
\newcommand\defeq{\ensuremath{\stackrel{\rm def}{=}}} 
\newcommand{\poly}{\mathrm{poly}}
\newcommand{\eval}{\textnormal{eval}}
\newcommand{\spn}{\textnormal{span}}
\newcommand{\wt}{\textnormal{wt}}
\newcommand{\cC}{\mathcal{C}}
\newcommand{\F}{\mathbb{F}}
\newcommand\bfa{\textnormal{\textbf{a}}}
\newcommand\bfb{\textnormal{\textbf{b}}}
\newcommand\bfe{\textnormal{\textbf{e}}}
\newcommand\bfi{\textnormal{\textbf{i}}}
\newcommand\bfj{\textnormal{\textbf{j}}}
\newcommand\bfX{\textnormal{\textbf{X}}}
\newcommand\bfZ{\textnormal{\textbf{Z}}}
\begin{document}
\title{Lifted multiplicity codes and the disjoint repair group property\thanks{A conference version of this paper appeared at RANDOM '19.}}
\author{Ray Li\thanks{Department of Computer Science, Stanford University.  Research supported by the National Science Foundation Graduate Research Fellowship Program under Grant No. DGE - 1656518.}\ \ and Mary Wootters\thanks{Departments of Computer Science and Electrical Engineering, Stanford University.  This work is partially supported by NSF grants CCF-1657049 and CCF-1844628.}}
\date{\today}
\maketitle

\begin{abstract}
  Lifted Reed-Solomon Codes (Guo, Kopparty, Sudan 2013) were introduced in the context of locally correctable and testable codes. They are multivariate polynomials whose restriction to any line is a codeword of a Reed-Solomon code. We consider a generalization of their construction, which we call \emph{lifted multiplicity codes}. These are multivariate polynomial codes whose restriction to any line is a codeword of a multiplicity code (Kopparty, Saraf, Yekhanin 2014). We show that lifted multiplicity codes have a better trade-off between redundancy and a notion of locality called the $t$-disjoint-repair-group property than previously known constructions. As a corollary, they also give better tradeoffs for PIR codes in the same parameter regimes. More precisely, we show that, for $t\le \sqrt{N}$, lifted multiplicity codes with length $N$ and redundancy $O(t^{0.585} \sqrt{N})$ have the property that any symbol of a codeword can be reconstructed in $t$ different ways, each using a disjoint subset of the other coordinates.  This gives the best known trade-off for this problem for any super-constant $t < \sqrt{N}$. We also give an alternative analysis of lifted Reed-Solomon codes using dual codes, which may be of independent interest.
\end{abstract}

\section{Introduction}\label{sec:intro}
In this work we study \em lifted multiplicity codes, \em and show how they provide improved constructions of codes with the \em $t$-disjoint repair group property \em ($t$-DRGP), a notion of locality in error correcting codes.

An \em error correcting code \em of length $N$ over an alphabet $\Sigma$ is a set $\cC \subseteq \Sigma^N$.  
There are several desirable properties in error correcting codes, and in this paper we study the trade-off between two of them.  The first is the size of $\cC$, which we would like to be as big as possible given $N$.  
The second desirable property is 
 \em locality. \em  Informally, a code $\cC$ exhibits locality if, given (noisy) access to $c \in \cC$, one can learn the $i$'th symbol $c_i$ of $c$ in sublinear time.  As we discuss more below, locality arises in a number of areas, from distributed storage to complexity theory.

Two constructions of codes with locality are \em lifted codes \em \cite{GuoKS13} and \em multiplicity codes \em \cite{KSY14}; in fact, both of these constructions were among the first known high-rate Locally Correctable Codes.  In this work, we consider a combination of the two ideas in \em lifted multiplicity codes, \em and we show that these codes exhibit locality beyond what's known for either lifted codes or for multiplicity codes.

More precisely,
we study a particular notion of locality called the \em $t$-disjoint-repair-group property \em ($t$-DRGP).  
Informally, we say that $\cC$ has the $t$-DRGP if any symbol $c_i$ of $c \in \cC$ can be obtained in $t$ different ways, each of which involves a disjoint set of coordinates of $c$.  Formally, we have the following definition. 
\begin{definition}
A code $\cC \subseteq \Sigma^N$ has the \em $t$-disjoint repair property \em if for every $i \in [N]$, there is a collection of $t$ disjoint subsets $S_1, \ldots, S_t \subseteq [N] \setminus \{i\}$, and functions $f_1, \ldots, f_t$ so that for all $c \in \cC$ and for all $j \in [t]$, $f_j(c|_{S_j}) = c_i$.  The sets $S_1, \ldots, S_t$ are called \em repair groups. \em 
\end{definition}

As discussed more in Section~\ref{sec:related} below, the $t$-DRGP naturally interpolates between many different notions of locality.
The $t$-DRGP is well-studied both when $t = O(1)$ is small (where it is related to Locally Repairable Codes and nearly equivalently to Private Information Retrieval Codes)  and $t = \Omega(N)$ is large (where it is equivalent to Locally Correctable Codes).
For this reason, it is natural to study the $t$-DRGP when $t$ is intermediate; for example, when $t = N^a$ for $a\in(0,1)$.  In this case, it is possible for the size of the code $|\cC|$ to be quite large: more precisely, it is possible for the \em rate \em  $R = \frac{\log_{|\Sigma|}|\cC|}{N}$ to approach $1$ (notice that we always have $|\cC| \leq |\Sigma|^N$, hence we always have $R\leq 1$).  Thus, the goal is to understand exactly how quickly the rate can approach $1$.  That is, given $t$, how small can the \em redundancy \em $N - RN$ be?

Several works have tackled this question, and we illustrate previous results in Figure~\ref{fig:litreview}.  
Our main result is that lifted multiplicity codes improve on the best-known trade-offs for all super-constant $t \leq \sqrt{N}$. 

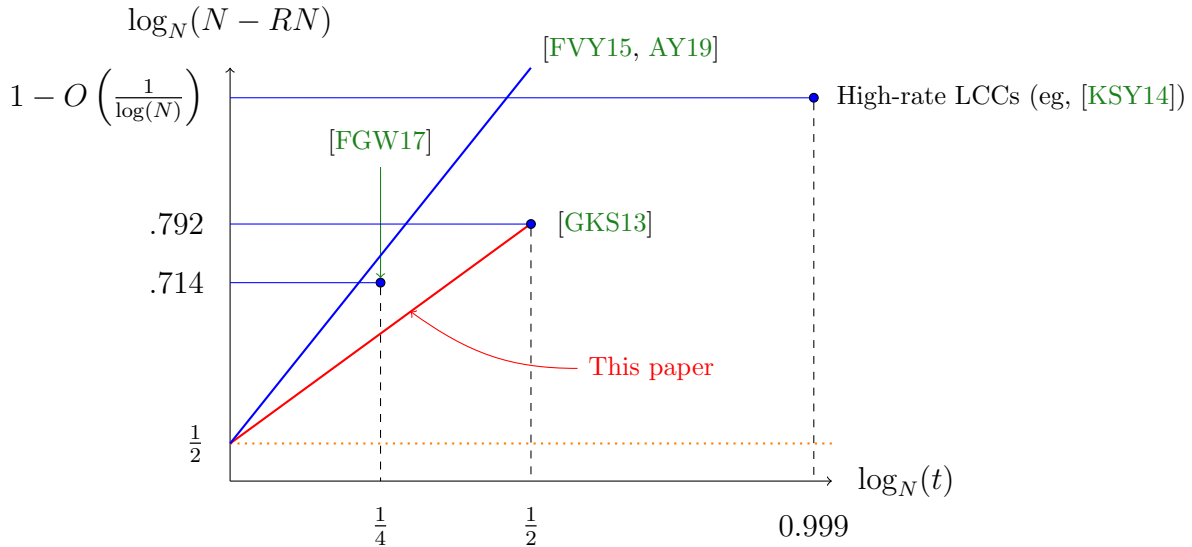
\begin{figure}
\centering
\begin{tikzpicture}[xscale=8, yscale=10]
\draw[->] (0,.45) to (1,.45);
\draw[->] (0,.45) to (0,1);
\node[anchor=west] at (1.025,.45) {$\log_N(t)$};
\node[anchor=south] at (0,1.025) {$\log_N( N - RN )$};
\draw[dotted, thick,orange] (0,.5) to (1, .5);
\node[anchor=east] at (-.025, .5) {$\frac{1}{2}$};
\node[anchor=east] at (-.025, .714) {$.714$};
\node[anchor=east] at (-.025, .792) {$.792$};
\node[anchor=east] at (-.025, .96) {$1 - O\left( \frac{1}{\log(N)}\right)$};
\node[draw, fill=blue, circle,scale=0.3](a) at (.25, .714) {};
\draw[blue] (a) to (0, .714);
\draw[dashed] (a) to (.25, .45);
\node at (.25, .39) {$\frac{1}{4}$};
\node[rotate=0](fgw) at (.25, .9) {\footnotesize \cite{FischerGW17}};
\draw[->,DarkGreen] (fgw) to (a);
\draw[domain=0:.5, smooth, variable=\x, red, thick] plot ( {\x} , {.5 + \x * .585});
\node[draw, fill=blue, circle, scale=0.3](b) at (.5, .792) {};
\draw[blue] (b) to (0, .792);
\node[anchor=west](lrs) at (.525, .792) {\footnotesize \cite{GuoKS13}};
\node[draw,fill=blue,scale=0.3,circle](c) at (.97, .96) {};
\node[anchor=west] at (.99, .96) {\footnotesize High-rate LCCs (eg, \cite{KSY14})};
\draw[blue] (c) to (0, .96);
\draw[dashed] (c) to (.97,.45);
\node at (.97,.39) {$0.999$};
\draw[dashed] (b) to (.5, .45);
\node at (.5, .39) {$\frac{1}{2}$};
\draw[thick,blue, domain=0:.5, smooth, variable=\y] plot ( {\y}, {\y + .5} );
\node[anchor=west](fvy) at (.5,1.025) {\footnotesize \cite{FVY15,AY17}};
\node[red](us) at (.7, .6) {\footnotesize This paper};
\draw[red,->] (us) to [out=180,in=-30] ( .3 , .6755 );

\end{tikzpicture}

\caption{The best trade-offs known between the number $t$ of disjoint repair groups and the redundancy $N - RN$.  Blue points and lines indicate upper bounds (possibility results), and the red line indicates our upper bound.  The best lower bound (impossibility result) available is that we must have $\log_N((1-R)N) \geq 1/2$ for any $t \geq 2$, and this is shown as the dotted orange line.  }\label{fig:litreview}
\end{figure}

\paragraph{Contributions.}
We summarize the main contributions of this work below.
\begin{enumerate}
\item
For $t\le \sqrt{N}$, we construct codes with the $t$-DRGP and redundancy at most  
$$O\left(t^{\log_2(3) - 1} \sqrt{N}\right) \approx O\left(t^{0.585}\sqrt{N}\right).$$  This gives the best known construction for all $t$ with $t = \omega(1)$ and $t < \sqrt{N}$; the only previous result that held non-trivially for a range of $t$ was redundancy $O(t \sqrt{N})$~\cite{FVY15,BE16,AY17} and our result also surpasses the specialized bound for $t=N^{1/4}$ of \cite{FischerGW17}.  

We note, however, that our construction has a large alphabet size, $N^{\Theta(N/t^2)}$. 
In contrast, the works \cite{GuoKS13,FVY15,FischerGW17} have alphabet size at most polynomial in $N$.  However, we can follow the approach of \cite{AY17} and make our code binary by replacing each symbol with an (uncoded) binary string.  This yields \em binary \em codes with $t$-DRGP, that have the best known trade-offs between $t$ and the redundancy when $N^{1/4} < t < N^{1/2}$ among all known codes with alphabet size $\poly(N)$.

\item
We give a new analysis of bivariate lifts of multiplicity codes.  Both multiplicity codes and lifted codes have been studied before (even in the context of the $t$-DRGP), but to the best of our knowledge the only work to consider lifted multiplicity codes is \cite{Wu15}.  That work studies $m$-variate lifts of multiplicity codes, where $m$ is large; its goal is to obtain new constructions of high-rate locally correctable codes.  In the context of our discussion, this corresponds to the $t$-DRGP when $t = N^{0.99}$.  In contrast, for bivariate lifts, we are able to obtain more refined bounds which lead to improved results for the $t$-DRGP when $t \leq \sqrt{N}$.
\end{enumerate}

\paragraph{Organization.} 
In the remainder of the introduction, we survey related work and give an overview of our approach.  In Section~\ref{sec:background}, we give the formal definitions about polynomials and derivatives that we need.  In Section~\ref{sec:construction}, we formally define lifted multiplicity codes.  In Section~\ref{sec:rate}, we prove that lifted multiplicity codes have high rate, and in Section~\ref{sec:drgp}, we prove that they have the $t$-DRGP, which gives rise to our main theorem, Theorem~\ref{thm:main}.

\subsection{Background and Related Work}\label{sec:related}
\subsubsection{Disjoint Repair Groups}
The $t$-DRGP and related notions have been studied both implicitly and explicitly across several communities.  
When $t = O(1)$ is small, several notions related to the $t$-DRGP have been studied, motivated primarily by distributed storage.  These include Locally Repairable Codes (LRCs) with availability~\cite{WZ14,RPDV14,TB_mult_rec,TBF16},  codes for Private Information Retrieval (PIR)~\cite{FVY15,BE16,AY17} (all codes with the $t$-DRGP are $t$-PIR codes) and batch codes~\cite{IKOS04,DGRS14,AY17}; we refer the reader to \cite{Ska16} for a survey of these notions.\footnote{In many (but not all) of these notions, we also care about the size of the repair groups but in this work we focus on the simpler problem of the $t$-DRGP.}

To see why the $t$-DRGP might be relevant for distributed storage, consider a setting where some data is encoded as $c \in \cC$, and then each $c_i$ is sent to a separate server.  If server $i$ is later unavailable, 
we might want to reconstruct $c_i$ without contacting too many other servers.  This can be done if each symbol has one small repair group; this is the defining property of LRCs.
Now suppose that several (say, $t-1$) servers are unavailable.
If $\cC$ has the $t$-DRGP then all $t-1$ unavailable symbols can be locally reconstructed: each node has at least $t$ disjoint repair groups and at most $t-1$ of them have been compromised.  

On the other hand, 
when $t = \Omega(N)$ is large, the $t$-DRGP has been studied in the context of Locally Decodable Codes and Locally Correctable Codes (LDCs/LCCs).  In fact, the $\Omega(N)$-DRGP is equivalent to a constant-query LCC, and the notion has been used to prove impossibility results for such codes~\cite{KT00,Woo10}.

Because of these motivations, there are 
several constructions of $t$-DRGP codes for a wide range of $t$; we illustrate the relevant ones in Figure~\ref{fig:litreview}.  In the context of coded PIR, \cite{FVY15,BE16,AY17} give constructions of $t$-DRGP codes with redundancy $O(t\sqrt{N})$.  This is known to be tight for $t=2$~\cite{RV16, Woo16}, but no better lower bound is known.\footnote{When the size $s$ of the repair groups is bounded, it is known that the redundancy must be at least $\Omega(N \ln(t)/s)$~\cite{TBF16}.}  When $t = \Omega(N)$ is very large, constructing codes with the $t$-DRGP is equivalent to constructing constant-query LCCs, and it is known that the rate of the code must tend to zero~\cite{Woo10}.  On the other hand, for any $\epsilon > 0$, when $t = O(N^{1 - \epsilon})$ is just slightly smaller, then work on high-rate LCCs~\cite{KSY14, GuoKS13, HOW13, KMRS16} (see also \cite{AY17}) imply that there are codes with rate $0.99$ (or any constant less than $1$) with the $t$-DRGP.\footnote{In fact we may even take $\epsilon$ slightly sub-constant using the construction of \cite{KMRS16}.}

When $t = \sqrt{N}$, there are a few constructions known that beat the $O(t \sqrt{N})$ bound mentioned above, including difference-set codes (see, e.g.,~\cite{LinCostello}) and, relevant for us, lifted parity-check codes~\cite{GuoKS13}.  These constructions achieve redundancy $N^{\log_4(3)} \approx N^{0.79}$ when $t = \sqrt{N}$.  In Appendix~\ref{app:dual}, we include a new proof of the fact that the lifted codes of \cite{GuoKS13} have this redundancy using a dual view of lifted codes. 

When $t < \sqrt{N}$, there is only one construction known which beats the $O(t \sqrt{N})$ bound, due to \cite{FischerGW17}.  For the special case of $t = N^{1/4}$, they give a construction based on ``partially lifted codes'' which has redundancy $O(N^{0.72}) = O(t^{0.88} \sqrt{N})$.

\subsubsection{Lifting and multiplicity codes}

Lifted multiplicity codes are based on lifted codes and multiplicity codes, both of which have a long history in the study of locality in error correcting codes.

\paragraph{Lifted Codes.}
Lifting was introduced by Guo, Kopparty and Sudan in \cite{GuoKS13}.  The basic idea can be illustrated by Reed-Solomon (RS) codes.  An RS code of degree $d$ over $\F_q$ is the code 
\[ \mathsf{RS}_{d,q} = \left\{ (f(x_1), \ldots, f(x_q)) \,:\, f \in \F_q[X], \deg(f) < d \right\}, \]
where $x_1, \ldots, x_q$ are the elements of $\F_q$.  There is a natural multi-variate version of RS codes, known as Reed-Muller codes:
\[ \mathsf{RM}_{d,q,m} =
\left\{ (f(\mathbf{x}_1), \ldots, f(\mathbf{x}_{q^m})) \,:\, f \in \F_q[X_1, \ldots, X_m], \deg(f) < d \right\}, \]
where $\mathbf{x}_1, \ldots, \mathbf{x}_{q^m}$ are the elements of $\F_q^m$.  Reed-Muller codes have a very nice locality property, which is that the restriction of a RM codeword to a line in $\F_q^m$ yields an RS codeword.  This fact has been taken advantage of extensively in applications like local decoding, local list-decoding and property testing.
However, RM codes have a downside, which is that if $d < q$ (required for the above property to kick in), they have very low rate.  With this inspiration, we could ask for the set $\cC$ which contains evaluations of \em all \em $m$-variate polynomials which restrict to low-degree univariate polynomials on every line.  Surprisingly, \cite{GuoKS13} showed that this set $\cC$ can be much larger than the corresponding RM code!  This code $\cC$ is called a \em lifted \em Reed-Solomon code, 
and the main structural result of \cite{GuoKS13} is that $\cC$ is the span of the monomials whose restrictions to lines are low-degree.  This property is key when analyzing the rate of these codes.   Moreover \cite{GuoKS13} showed that this is the case when we begin with \em any \em affine-invariant code, not just RS codes.

The original motivation for lifted codes was to construct LCCs, but \cite{GuoKS13} actually also give a code with the $\sqrt{N}$-DRGP, mentioned above; we give an alternate proof that this construction has the $\sqrt{N}$-DRGP in Appendix~\ref{app:dual}.  A variant of lifting was also used in \cite{FischerGW17} to construct $N^{1/4}$-DRGP codes; however, the analysis of this construction is quite brittle and seems difficult to extend to non-trivial constructions for $t \neq N^{1/4}$. 

\paragraph{Multiplicity Codes.}
Multiplicity codes were introduced by Kopparty, Saraf and Yekhanin~\cite{KSY14} with the goal of constructing high-rate LCCs.  The basic idea of multiplicity codes is to get around the low rate of RM codes discussed above in a different way, by appending derivative information to allow for higher-degree polynomials.  That is, it is not useful to have an RS code with degree $d > q$, since $x^q = x$ for any $x \in \F_q$.  However, if we replace the single evaluation $f(x)$ with a vector of evaluations $(f(x), f^{(1)}(x), \ldots, f^{(r-1)}(x))$, where $f^{(i)}$ denotes the $i$'th derivative, then it does make sense to take $d > q$.  The $m$-variate multiplicity code $\mathsf{Mult}_{d,q,m,r}$ of degree $d$ and order $r$ over $\F_q$ is then defined similarly to $\mathsf{RM}_{d,q,m}$:
\[ \mathsf{Mult}_{d,q,m,r} =
\left\{ (f^{(<r)}(\mathbf{x}_1), \ldots, f^{(<r)}(\mathbf{x}_{q^m})) \,:\, f \in \F_q[X_1, \ldots, X_m], \deg(f) < d \right\}, \]
where $f^{(<r)}(\mathbf{x}) \in \F_q^{{m + r - 1 \choose m}}$ is a vector containing all of the partial derivatives of $f$ of order less than $r$, evaluated at $\mathbf{x}$.
Since their introduction,
multiplicity codes have found several uses beyond LCCs, including list-decoding~\cite{Kop15, GW13}, and have even been used to explicitly construct codes with the $t$-DRGP~\cite{AY17}.

\paragraph{Lifted Multiplicity Codes.}
To the best of our knowledge, the only work to study lifted multiplicity codes is the work of Wu~\cite{Wu15}.  The goal of that work is to obtain versions of multiplicity codes which are still high-rate LCCs but which require lower-order derivatives than the construction of \cite{KSY14}.  
The main result in \cite{Wu15} is that lifted multiplicity codes of rate $1 - \alpha$ are LCCs with locality $N^{\epsilon}$ (this corresponds roughly to having the $t$-DRGP with $t = O(N^{1 - \epsilon})$).  However, since the number of variables in the lift is large, it is hard to get a very precise handle on the codimension.

In comparison, in our work, we focus on the $t$-DRGP for $t \leq \sqrt{N}$, but where our goal is to get much tighter bound on the codimension of the code.  We address the quantitative comparison between our bound on the rate and that obtainable by the techniques of~\cite{Wu15} in Remark~\ref{rem:Wu15}.

We note that the construction in \cite{Wu15} is similar to the construction presented here.
Since this construction is somewhat non-trivial (for reasons discussed below), we include the details. 

\paragraph{Why only bivariate lifts?}
In contrast to \cite{Wu15}, we study \em bivariate \em lifts of multiplicity codes.
By focusing only on bivariate lifts (as was also done in \cite{FischerGW17}), we obtain a more precise handle on the codimension of lifted multiplicity codes, which gives results for the $t$-DRGP for $t \leq \sqrt{N}$.  (See Remark~\ref{rem:bivariate} for more on why bivariate lifts make it much easier to analyze the codimension.) 
We believe that this wide range of $t$ is interesting, and thus we think that bivariate lifts are worth focusing on.

We expect that lifted multiplicity codes can be analyzed over more variables.
However, we expect that this will not improve the tradeoff between the redundancy and $t$ (the number of repair groups) for the setting $t\le \sqrt{N}$.
Indeed, this tradeoff becomes worse for ordinary multiplicity codes \cite{AY17}: for these codes, a larger number of variables yields better bounds only for larger values of $t$.
In general, $m$-variable lifted multiplicity codes can have up to $q^{m-1} = N^{(m-1)/m}$ disjoint repair groups, so $\ceil{1/\varepsilon}$ variables are needed for $N^{1-\varepsilon}$ repair groups.
For $N^{(m-2)/(m-1)} \le t\le N^{(m-1)/m}$, we expect that the number of variables that gives the best rate for lifted multiplicity codes is $m$.
We leave the analysis for more variables $m$ this for future work (see Section~\ref{sec:conclude}).

\subsection{Our approach}
We study lifted multiplicity codes to obtain improved constructions of codes with the $t$-DRGP.
We focus on bivariate lifts in this paper in order to obtain codes with $t$-DRGP for $t\le \sqrt{N}$.
We expect that lifted multiplicity codes in more than two variables also give better codes for the $t$-DRGP when $t > \sqrt{N}$.

\subsubsection{Definition of lifted multiplicity codes}
It is not immediately obvious how to apply lifting (and in particular, the nice characterization of it developed in \cite{GuoKS13} as the span of ``good'' monomials) to univariate multiplicity codes.  
We first note that the univariate multiplicity code $\mathsf{Mult}_{d,q,1,r} \subseteq \left(\F_q^r\right)^q$ does not fit the affine-invariant framework of \cite{GuoKS13}, so their results do not immediately apply.  
Instead, we might try to define the bivariate lift of $\mathsf{Mult}_{d,q,1,r}$ as the set
of vectors $(f^{(<r)}(\mathbf{x}_1), \ldots, f^{(<r)}(\mathbf{x}_{q^2}))$ for all polynomials $f$ so that every restriction of $f$ to a line agrees with some polynomial of degree less than $d$ on its first $r-1$ derivatives; that is, the restriction of $f$ is \em equivalent up to order $r$ \em to a polynomial of degree less than $d$.  This works, but there are two non-trivial things to deal with.
\begin{enumerate}
\item First, in order to get a handle on the rate of the code, as in \cite{GuoKS13} we show that the set of valid polynomials $f$ includes the span of a large set of ``good'' monomials. In contrast to \cite{GuoKS13},  the good monomials in this work do not span the entire code. However, lower bounding the number of good monomials, which in turns gives a lower bound on the rate of the code, turns out to be enough for our results.

\item Second, we need to take some care about what monomials we allow.  With lifted RS codes, one only allows monomials $X^a Y^b$ with individual degrees $a,b < q$; otherwise, we could have multiple monomials which correspond to the same codeword which leads to problems if we are counting monomials in order to understand the dimension of the code.  As we show in Lemma~\ref{lem:eval}, it turns out that with multiplicity codes, we should only allow monomials $X^a Y^b$ with $\lfloor a/q \rfloor + \lfloor b/q \rfloor < r$; otherwise, we would have multiple monomials the correspond to the same codeword and this would create similar problems.
\end{enumerate}

Dealing with these issues leads us to the final code and rate analysis, where we define the lifted multiplicity code to be all polynomials spanned by monomials $X^aY^b$ with $\lfloor a/q \rfloor + \lfloor b/q \rfloor < r$, such that the restriction of the polynomial to a line is equivalent up to order $r$ to some univariate polynomial of degree less than $d$.
We then lower bound the number of evaluations of monomials in this code, giving a lower bound on the rate.
We note that the work \cite{Wu15} considers a similar construction.

\subsubsection{Lifted multiplicity codes have the $t$-DRGP}
In Corollary~\ref{cor:mult-6} we give a lower bound on the number of $(q,r,d)$-good monomials, and this leads to a lower bound on the dimension of the lifted multiplicity code; crucially, this can be quite a bit bigger than the dimension of the corresponding multivariate multiplicity code.

Finally, we observe that lifted multiplicity codes have the $t$-DRGP for a range of values of $t$.  Similarly to previous constructions based on multivariate polynomial codes, the disjoint repair groups to recover the symbol $f^{(<r)}(\mathbf{x})$ are given by  disjoint collections of lines through $\mathbf{x}$.  More precisely, the values $f^{(<r)}(\mathbf{y})$ for the set of $\mathbf{y}$ that lie on $r$ distinct lines through $\mathbf{x}$ can be used to recover $f^{(<r)}(\mathbf{x})$.  Thus, the number of disjoint repair groups is $q/r = \sqrt{N}/r$.  By adjusting $r$, we obtain the trade-off shown in Figure~\ref{fig:litreview}.  
Our main theorem is as follows.

\begin{theorem}
  \label{thm:main}
  For $q=2^\ell$ and $r=2^{\ell'}$ with $1\le \ell'\le \ell$, 
there exists a code $\mathcal{C}$  over $\mathbb{F}^{\binom{r+1}{2}}_q$ with the following properties. 
  \begin{itemize}
  \item The length of the code is $q^2$.
  \item The rate of the code is at least
\[ 1 - \frac{ 3r^{\log_2(8/3)}q^{\log_2(3)} }{ \binom{r+1}{2} q^2 }, \]
so that the redundancy is at most
\[ \frac{ 3r^{\log_2(8/3)}q^{\log_2(3)} }{ \binom{r+1}{2}}. \]
  \item The code has the $q/r$-disjoint repair group property.
  \end{itemize}
\end{theorem}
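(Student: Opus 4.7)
The plan is to instantiate the bivariate lifted multiplicity code of Section~\ref{sec:construction} with parameters $q$ and $r$ as in the theorem (and with $d$ chosen large enough that the monomials claimed below are admitted, but small enough that a univariate multiplicity codeword on a line can correct a single point-erasure). The length is immediate: a codeword is $(f^{(<r)}(\mathbf{x}))_{\mathbf{x}\in\F_q^2}$, consisting of $q^2$ entries in $\F_q^{\binom{r+1}{2}}$.

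For the rate, I would show that the code contains the evaluation of every $\F_q$-linear combination of $(q,r,d)$-good monomials, where the \emph{allowed} monomials are the $X^aY^b$ with $\lfloor a/q\rfloor+\lfloor b/q\rfloor<r$ and the \emph{good} ones are those whose restriction to every line is equivalent up to order $r$ to a polynomial of degree less than $d$. The total number of allowed monomials is exactly $\binom{r+1}{2}q^2$, which matches the $\F_q$-dimension of the ambient space $(\F_q^{\binom{r+1}{2}})^{q^2}$. Lemma~\ref{lem:eval} implies that the evaluation map is injective on the span of allowed monomials, so the codimension of the code over $\F_q$ is at most the number of ``bad'' allowed monomials, which Corollary~\ref{cor:mult-6} bounds by $3r^{\log_2(8/3)}q^{\log_2 3}$. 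Dividing by $\binom{r+1}{2}$ converts this codimension into the claimed redundancy per coordinate.

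For the $q/r$-disjoint repair property, fix $\mathbf{x}\in\F_q^2$. There are $q+1$ affine lines through $\mathbf{x}$, any two of which meet only at $\mathbf{x}$, so one can select $q/r$ disjoint groups $G_1,\ldots,G_{q/r}$ of $r$ lines each using $q\le q+1$ lines in total; the sets $S_j=\bigl(\bigcup_{\ell\in G_j}\ell\bigr)\setminus\{\mathbf{x}\}$ are then pairwise disjoint and avoid $\mathbf{x}$. To recover $f^{(<r)}(\mathbf{x})$ from $S_j$, proceed in two stages. First, for each line $\ell\in G_j$, the restriction of $f$ to $\ell$ is equivalent up to order $r$ to a polynomial of degree less than $d$, and so the $q-1$ evaluations on $\ell\setminus\{\mathbf{x}\}$ determine the missing evaluation at $\mathbf{x}$ by single-erasure decoding of the univariate multiplicity code. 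Second, for each $k<r$ and each $\ell\in G_j$ with direction $\mathbf{v}_\ell$, the recovered $k$-th Hasse derivative satisfies $g_\ell^{(k)}(0)=\sum_{a+b=k}v_{\ell,1}^a v_{\ell,2}^b D^{(a,b)}f(\mathbf{x})$, giving one equation in the $k+1$ unknowns $\{D^{(a,b)}f(\mathbf{x}):a+b=k\}$; running $\ell$ over the $r$ distinct projective directions in $G_j$ yields a generalized Vandermonde system of rank $k+1$, which can be inverted to recover those unknowns. Doing this for each $k<r$ reconstructs the full tuple $f^{(<r)}(\mathbf{x})$.

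The main obstacle is the rate step: a naive multivariate count as in~\cite{Wu15} is too lossy for $t\le\sqrt N$, and the improved exponent $\log_2 3$ only appears after the refined two-variable good-monomial recursion distilled in Corollary~\ref{cor:mult-6}, which closely mirrors the dual-code analysis of the Reed--Solomon lift sketched in Appendix~\ref{app:dual}. By contrast, the length is trivial and the DRGP step reduces cleanly to univariate multiplicity decoding combined with a small Vandermonde inversion.
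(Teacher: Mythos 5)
Your proposal takes essentially the same route as the paper: you instantiate the $(q,r,rq-r)$ bivariate lifted multiplicity code, lower-bound the rate via the good-monomial count (Lemma~\ref{lem:good} together with Lemma~\ref{lem:mult-5}, packaged as Corollary~\ref{cor:mult-6}), and obtain the $q/r$-DRGP by bundling lines through a point into groups of $r$ and inverting the Vandermonde system of Lemma~\ref{lem:local-1}/Corollary~\ref{lem:local-2}, after first interpolating $P_{L}(T)$ from its derivative data at the $q-1$ points on each line other than $\mathbf{x}$. One small caution on the DRGP step: Definition~\ref{def:lmc2} only guarantees low-degree restrictions along lines in $\mathcal{L}$, i.e., the $q$ lines of the form $(T,\alpha T+\beta)$ through $\mathbf{x}$; the vertical line $(\gamma,T)$ is excluded from $\mathcal{L}$ and carries no degree guarantee. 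So when you say you can form $q/r$ disjoint repair groups ``using $q\le q+1$ lines in total,'' the $q$ lines must be exactly those in $\mathcal{L}$, not an arbitrary $q$-subset of all $q+1$ affine lines. Also, your closing comment that the good-monomial count ``closely mirrors the dual-code analysis'' of Appendix~\ref{app:dual} slightly misattributes the mechanism: the paper notes that the dual-code argument is not directly related to the main body, and the actual source of the improved exponent is the binary-arithmetic bound of Lemma~\ref{lem:mult-3} via Lucas's theorem, which avoids the lossy union bound over the $s$ top coefficients.
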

As a remark, our techniques can also recover any symbol from any one of its repair groups in polynomial time.
For any $\gamma\in[0,1]$, choosing $q=2^\ell$ and $r=2^{\ell'}$ with $\gamma \approx \ell'/\ell$ gives a code with length
$N = q^2$ and redundancy at most
\[ 6N^{\log_4(3)- \gamma (1- \log_4(8/3)) } \]
with the $N^{(1 - \gamma)/2}$-DRGP.  This is made formal in the following corollary.
 
\begin{corollary}
  \label{cor:main}
For any $\epsilon \in(0,\frac{1}{2})$, there are infinitely many $N$ so that, for $t=\left\lfloor N^\epsilon \right\rfloor $,
there exists a code of length $N$ which has the $t$-DRGP and redundancy at most 
$6 t^{\log_2(3) - 1} \sqrt{N}. $
\end{corollary}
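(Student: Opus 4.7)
The plan is to instantiate Theorem~\ref{thm:main} with parameters $q = 2^\ell$ and $r = 2^{\ell'}$ chosen so that $N = q^2$ and $q/r$ matches $t = \lfloor N^\epsilon \rfloor$. Setting $\gamma = 1 - 2\epsilon \in (0,1)$, I aim for $\ell'/\ell \approx \gamma$, giving $q/r = 2^{(1-\gamma)\ell} \approx N^\epsilon$. When $\gamma = p_1/p_2$ is rational, taking $\ell = p_2 k$ and $\ell' = p_1 k$ for $k = 1, 2, \ldots$ produces an infinite family of valid $\ell$ along which $N^\epsilon$ is itself a power of $2$, so that $q/r = \lfloor N^\epsilon\rfloor = t$ holds exactly. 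For irrational $\gamma$, one passes to an infinite subsequence of $\ell$ along which $\{\gamma\ell\}$ is sufficiently small that the same matching holds.

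With these parameters fixed, Theorem~\ref{thm:main} produces a code of length $N$ with the $(q/r)$-DRGP. Since $q/r \geq 2^{(1-\gamma)\ell} = N^\epsilon \geq t$, the code also has the $t$-DRGP. For the redundancy, I start from Theorem~\ref{thm:main}'s bound $\frac{3\, r^{\log_2(8/3)}\, q^{\log_2 3}}{\binom{r+1}{2}}$ and use $\binom{r+1}{2} \geq r^2/2$ to bound it by $6\, r^{\log_2(8/3)-2}\, q^{\log_2 3}$. The algebraic identity $\log_2(8/3) - 2 = -\log_2(3/2) = -(\log_2 3 - 1)$ then rewrites this as $6\, q\, (q/r)^{\log_2(3/2)}$, and substituting $q = \sqrt N$ together with $q/r = t$ yields exactly the desired bound $6\, \sqrt N \cdot t^{\log_2 3 - 1}$.

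The main technical subtlety is that $6\, q\, (q/r)^{\log_2(3/2)}$ matches $6\, \sqrt N\, t^{\log_2 3 - 1}$ only when $q/r = t$ on the nose; a gap $q/r > t$ by a factor $\alpha$ inflates the bound by $\alpha^{\log_2(3/2)}$. The rational-$\gamma$ construction above sidesteps this entirely because $N^\epsilon$ is a power of $2$, and the infinite family $N = 2^{2 p_2 k}$ already suffices to prove the corollary as stated. For irrational $\gamma$, one passes to a subsequence on which equidistribution of $\{\gamma \ell\}$ modulo $1$ brings $q/r$ arbitrarily close to $t$, preserving the constant $6$ in the limit.
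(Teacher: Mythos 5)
Your approach matches the paper's implicit argument: instantiate Theorem~\ref{thm:main} with $\gamma = 1-2\epsilon$, choose $\ell'$ so that $q/r \approx N^\epsilon$, and simplify $\frac{3\,r^{\log_2(8/3)}q^{\log_2 3}}{\binom{r+1}{2}}$ using $\binom{r+1}{2}\ge r^2/2$ together with the identity $\log_2(8/3)-2 = -(\log_2 3 - 1)$ to reach $6\sqrt{N}\,(q/r)^{\log_2 3 - 1}$. This algebra is correct. For rational $\gamma = p_1/p_2$, your choice $\ell = p_2 k$, $\ell' = p_1 k$ does make $N^\epsilon = 2^{(p_2-p_1)k}$ an integer power of $2$ with $q/r = t$ exactly, so the stated bound holds along that infinite family; for rational $\epsilon$ the proof is complete.

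The irrational case, however, is not actually closed by the passage to a subsequence, and the sentence ``preserving the constant $6$ in the limit'' is where the argument breaks. When $\epsilon$ is irrational, $N^\epsilon = 2^{2\epsilon\ell}$ is never an integer, so $t = \lfloor N^\epsilon\rfloor < N^\epsilon$; meanwhile Theorem~\ref{thm:main} forces $q/r$ to be a power of $2$, and the smallest admissible value satisfying $q/r\ge t$ is $q/r = 2^{\lceil 2\epsilon\ell\rceil} > N^\epsilon > t$. Hence $\alpha := (q/r)/t > 1$ for \emph{every} valid $N$, and the redundancy bound you derive is $6\,\alpha^{\log_2(3/2)}\,t^{\log_2 3 - 1}\sqrt N$, which is strictly greater than $6\,t^{\log_2 3 - 1}\sqrt N$ --- approaching the target along a subsequence is not the same as achieving it for infinitely many $N$. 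To be fair, the paper's own informal derivation preceding Corollary~\ref{cor:main} also tacitly assumes $q/r = N^\epsilon$ exactly and does not spell out the irrational case, so this is arguably a gap in the corollary's stated constant rather than a defect peculiar to your write-up. But as written, your proof establishes the constant $6$ only when $\gamma = 1-2\epsilon$ is rational; to handle all $\epsilon$ you would need either to enlarge the constant (e.g.\ rounding $q/r$ up to a power of $2$ costs at most a factor $2^{\log_2(3/2)} = 3/2$, giving $9$) or to track the sharper bad-monomial count $2rs^{\log_2(4/3)}q^{\log_2 3} + rs^2/2$ from Lemma~\ref{lem:mult-5} directly rather than the rounded-up $3rs^{\log_2(4/3)}q^{\log_2 3}$ of Theorem~\ref{thm:main}.
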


We note that Theorem~\ref{thm:main} also yields results for constant $t$, not just for $t = N^{\epsilon}$ as presented in Corollary~\ref{cor:main}.  For example, by setting $r=q/2$ we
obtain a code with the $2$-DRGP and redundancy at most $9\sqrt{N}$.  The constant $9$ is not optimal here (the optimal constant for $t=2$ is known to be $\sqrt{2}$~\cite{RV16}), but to the best of our knowledge, Theorem~\ref{thm:main} does yield the best known bounds for any super-constant $t$.

The codes in Theorem~\ref{thm:main} and Corollary~\ref{cor:main} have the disadvantage of having a large alphabet size.
Indeed, we have $r = q/t$, and so the alphabet size is $Q=q^{\binom{r+1}{2}} = N^{\Theta(N/t^2)}$, which is very large. 
It is an interesting question to obtain the results of Corollary~\ref{cor:main} with a code over a smaller alphabet (see open questions in Section~\ref{sec:conclude}).
Among the existing work in Figure~\ref{fig:litreview}, \cite{GuoKS13, FVY15, FischerGW17} all have $\poly(N)$ or smaller sized alphabets.

For now, we observe as in \cite{AY17} that, if $\mathcal{C}$ is a code with the $t$-DRGP, then replacing $\mathcal{C}$ with a binary code $\mathcal{C}'$, where each symbol in each codeword is replaced with $\log(Q)$ binary bits, yields a code that also has the $t$-DRGP.
As a result, applying this to the code in Corollary~\ref{cor:main} yields a code with length $N_{bin} = N\log(Q) = N^{2-2\varepsilon}$ and redundancy $O(t^{\log_2(3/2)}\sqrt{N}\log(Q))=O(N^{3/2+\varepsilon\log_2(3/8)}\log N)$.
\begin{corollary}
  \label{cor:binary}
For any $\epsilon \in(0,\frac{1}{2})$, there are infinitely many $N$ so that, for $t=\left\lfloor N^\frac{\epsilon}{2-2\varepsilon} \right\rfloor $,
there exists a \emph{binary} code of length $N$ which has the $t$-DRGP and redundancy at most 
$\tilde O(N^{\frac{3/2 + \varepsilon\log_2(3/8)}{2-2\varepsilon}})$.
\end{corollary}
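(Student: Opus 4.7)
The plan is to follow the alphabet-reduction trick of \cite{AY17}: take the code $\mathcal{C}_0$ from Corollary~\ref{cor:main}, expand each symbol into binary, and then re-express the resulting parameters in terms of the new length. Concretely, let $\mathcal{C}_0$ have length $M$ over the alphabet $\Sigma = \mathbb{F}_q^{\binom{r+1}{2}}$ with $q=\sqrt{M}$, $r = q/t_0$, and $t_0 = \lfloor M^{\epsilon}\rfloor$. Because Theorem~\ref{thm:main} forces $q$ to be a power of $2$, the alphabet size $Q = q^{\binom{r+1}{2}}$ is an exact power of $2$, so fix an $\mathbb{F}_2$-linear bijection $\phi\colon \Sigma \to \{0,1\}^{\log_2 Q}$. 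A direct count gives $\binom{r+1}{2} = \Theta(r^2) = \Theta(M^{1-2\epsilon})$, hence $\log_2 Q = \tilde\Theta(M^{1-2\epsilon})$.

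Next I would define the binary code $\mathcal{C} \subseteq \{0,1\}^N$ with $N = M\log_2 Q$ by applying $\phi$ symbolwise to every codeword of $\mathcal{C}_0$. To check that $\mathcal{C}$ inherits the $t_0$-DRGP, fix a binary coordinate $j \in [N]$, let $i \in [M]$ be the $\Sigma$-index of the block containing $j$, and let $S_1,\dots,S_{t_0} \subseteq [M]\setminus\{i\}$ be disjoint repair groups for $i$ in $\mathcal{C}_0$. Replacing each $S_k$ by the union $S_k'$ of the corresponding binary blocks yields $t_0$ disjoint subsets of $[N]\setminus\{j\}$; from the bits in $S_k'$ one first reconstructs the $\Sigma$-symbols $c_{i'}$ for $i'\in S_k$, then applies the recovery function for $c_i$ guaranteed by $\mathcal{C}_0$, and finally extracts the $j$-th bit of $\phi(c_i)$.

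For the parameters, since $|\mathcal{C}| = |\mathcal{C}_0|$, the binary redundancy equals $N - \log_2|\mathcal{C}| = \mathrm{red}(\mathcal{C}_0)\cdot \log_2 Q$. Plugging in $\mathrm{red}(\mathcal{C}_0)\le 6 t_0^{\log_2 3 - 1}\sqrt{M}$ and $\log_2 Q = \tilde\Theta(M^{1-2\epsilon})$, and using the identity $\epsilon(\log_2 3 - 1)-2\epsilon = \epsilon\log_2(3/8)$, yields binary redundancy $\tilde O(M^{3/2 + \epsilon\log_2(3/8)})$. Finally substitute $M = \tilde\Theta(N^{1/(2-2\epsilon)})$ (from $N = \tilde\Theta(M^{2-2\epsilon})$) to obtain $t_0 = \tilde\Theta(N^{\epsilon/(2-2\epsilon)})$ and the stated redundancy exponent in $N$. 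The only real wrinkle is a rounding step: Theorem~\ref{thm:main} only produces lengths $M=q^2$ with $q,r$ powers of $2$, so $\lfloor N^{\epsilon/(2-2\epsilon)}\rfloor$ is only hit by the discrete family of admissible $(\ell,\ell')$ pairs, which is why the statement is restricted to infinitely many $N$ rather than all $N$; modulo this parameter-matching detail, the argument is a direct reduction.
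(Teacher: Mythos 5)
Your argument is correct and is essentially the same reduction the paper uses (the symbol-to-bits expansion attributed to \cite{AY17}), with the same parameter arithmetic $\epsilon(\log_2 3 - 1) - 2\epsilon = \epsilon\log_2(3/8)$ and the same substitution $N_{\mathrm{bin}} = M\log_2 Q = \tilde\Theta(M^{2-2\epsilon})$. You fill in details the paper leaves implicit---the explicit DRGP-inheritance argument, the observation that $Q$ is a power of $2$, and the rounding caveat absorbed by ``infinitely many $N$'' and the $\tilde O$---but there is no difference in approach.
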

Among codes with alphabet size $\poly(N)$ or smaller, our binary codes give the best known tradeoff between $t$ and redundancy when $N^{1/4}<t<N^{1/2}$ 
(at $t=N^{1/4}$ \cite{FischerGW17} gives a better redundancy).

\section{Preliminaries}\label{sec:background}
In this section, we introduce the background we need on polynomials and derivatives over finite fields.
Throughout this paper, we assume that $q$ is a power of 2.
Let $\mathbb{F}_q$ denote the finite field of order $q$, and let $\mathbb{F}_q^*$ denote its multiplicative subgroup.

If $a$ and $b$ are nonnegative integers with binary representations $a=\overline{a_{\ell-1}\cdots a_0}$ and $b=\overline{b_{\ell-1}\cdots b_0}$, then we write $a\le_2 b$ if $a_i\le b_i$ for $i=0,\dots,\ell-1$.
If $a$ is an integer, let $(a\mod c)$ denote the element of $\{0,\dots,c-1\}$ congruent to $a$ mod $c$.
We write $a\le_2^\ell b$ if $(a\mod 2^\ell)\le_2 (b\mod 2^\ell)$.

As in \cite{GuoKS13}, we use Lucas's theorem.
\begin{proposition}[Lucas's theorem]
\label{prop:lucas}
  Let $p$ be a prime and $a=\overline{a_{\ell-1}\cdots a_0}, b=\overline{b_{\ell-1}\cdots b_0}$ be written in base $p$.
  Then
  \begin{align}
    \binom{a}{b}\equiv\prod_{i=0}^{\ell-1} \binom{a_i}{b_i}\mod p
  \label{}
  \end{align}
  In particular, if $p=2$, then $\binom{a}{b}\equiv 1\mod p$ if and only if $b\le_2 a$.
\end{proposition}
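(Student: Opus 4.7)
The plan is to use the classical generating-function proof, which reduces Lucas's theorem to the ``freshman's dream'' modulo $p$. The starting observation is that for $1 \le k \le p-1$, the binomial coefficient $\binom{p}{k}$ is divisible by $p$ (the factor of $p$ in the numerator is not cancelled), and hence $(1+x)^p \equiv 1 + x^p \pmod{p}$ as polynomials in $\mathbb{F}_p[x]$. Iterating this (or applying the Frobenius endomorphism $i$ times) yields $(1+x)^{p^i} \equiv 1 + x^{p^i} \pmod{p}$ for every $i \ge 0$.

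Using the base-$p$ expansion $a = \sum_{i=0}^{\ell-1} a_i p^i$, I would then write
\[
(1+x)^a \;=\; \prod_{i=0}^{\ell-1} (1+x)^{a_i p^i} \;\equiv\; \prod_{i=0}^{\ell-1} \bigl(1 + x^{p^i}\bigr)^{a_i} \pmod{p}.
\]
The left-hand side, expanded by the ordinary binomial theorem, has $\binom{a}{b}$ as its coefficient of $x^b$. On the right-hand side, expanding each factor gives $\bigl(1+x^{p^i}\bigr)^{a_i} = \sum_{j_i=0}^{a_i} \binom{a_i}{j_i} x^{j_i p^i}$, so the coefficient of $x^b$ in the product is $\sum \prod_i \binom{a_i}{j_i}$, where the sum ranges over tuples $(j_0, \ldots, j_{\ell-1})$ with $0 \le j_i \le a_i \le p-1$ and $\sum_i j_i p^i = b$.

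The key step is then to observe that, because $0 \le j_i \le p-1$, the sum $\sum j_i p^i = b$ forces $j_i = b_i$ by uniqueness of the base-$p$ representation of $b$. Hence exactly one tuple contributes, and the coefficient is $\prod_i \binom{a_i}{b_i}$ (interpreted as $0$ whenever some $b_i > a_i$, which matches $\binom{a_i}{b_i} = 0$). Equating the two coefficients of $x^b$ modulo $p$ gives the stated congruence. The special case $p = 2$ is immediate: each $\binom{a_i}{b_i}$ with $a_i, b_i \in \{0,1\}$ is $1$ exactly when $b_i \le a_i$, so the product is $1$ iff $b \le_2 a$ and $0$ otherwise.

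There is no real obstacle here --- the proof is classical and the only nontrivial ingredient is the freshman's dream modulo $p$. The care point, if any, is ensuring that the uniqueness of the base-$p$ digit extraction really does use $a_i \le p-1$ to bound $j_i$, so that no ``carries'' confuse the matching of coefficients.
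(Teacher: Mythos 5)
Your proof is correct and is the standard generating-function argument for Lucas's theorem, built on the freshman's-dream identity $(1+x)^p \equiv 1+x^p \pmod p$, the factorization over the base-$p$ digits of $a$, and the uniqueness of base-$p$ digit extraction to match the coefficient of $x^b$. Note, though, that the paper does not prove this proposition at all: it is quoted as a classical fact (following \cite{GuoKS13}) and used directly, so there is no in-paper argument to compare against. Your derivation of the $p=2$ special case, reading off that $\binom{a_i}{b_i}=1$ exactly when $b_i\le a_i$, is likewise correct and matches the definition of $\le_2$ given in the paper.
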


\subsection{Polynomials and derivatives}
For a vector $\bfi = (i_1,\dots,i_m)$ of nonnegative integers, its \emph{weight}, denoted $\wt(\bfi)$, equals $\sum_{k=1}^{m} i_k$.
For a field $\mathbb{F}$, let $\mathbb{F}[X_1,\dots,X_m]=\mathbb{F}[\bfX]$ be the ring of polynomials in the variables $X_1,\dots,X_m$ with coefficients in $\mathbb{F}$.
For a vector of nonnegative integers $\bfi=(i_1,\dots,i_m)$ and a vector $\bfX=(X_1,\dots,X_m)$ of variables, let $\bfX^\bfi$ denote the monomial $\prod_{j=1}^{m} X_j^{i_j}\in\mathbb{F}[\bfX]$, and for a vector $\bfa=(\alpha_1,\dots,\alpha_m)\in\mathbb{F}^m$, let $\bfa^\bfi$ denote the value $\prod_{j=1}^{m} \alpha_j^{i_j}$, where $0^0\defeq 1$.
For nonnegative vectors $\bfi=(i_1,\dots,i_m)$ and $\bfj=(j_1,\dots,j_m)$, we write $\bfi\le \bfj$ if $i_k\le j_k$ for all $k$.
We also write $\binom{\bfi+\bfj}{\bfi}$ to denote $\prod_{k=1}^{m} \binom{i_k+j_k}{i_k}$. 
For nonnegative vector $\bfi$, we let $[\bfX^\bfi]P(\bfX)$ denote the coefficient of $\bfX^\bfi$ in the polynomial $P(\bfX)$.

We will use Hasse derivatives, a notion of derivatives over finite fields:
\begin{definition}[Hasse derivatives]
  \label{def:deriv}
  For $P(\bfX)\in \mathbb{F}[\bfX]$ and a nonnegative vector $\bfi$, the $\bfi$-th (Hasse) derivative of $P$, denoted $P\ind{\bfi}(\bfX)$ or $D\ind{\bfi}P(\bfX)$, is the coefficient of $\bfZ^\bfi$ in the polynomial $\tilde P(\bfX,\bfZ)\defeq P(\bfX+\bfZ) \in\mathbb{F}[\bfX,\bfZ]$.
  Thus,
  \begin{align}
    P(\bfX+\bfZ) = \sum_{\bfi}^{} P\ind{\bfi}(\bfX)\bfZ^i.
  \end{align}
\end{definition}
For $\mathbf{x} \in \F_q^m$ and $P(X)\in \mathbb{F}_q[\bfX]$, 
we use the notation $P^{(<r)}(\mathbf{x}) \in \F_q^{{m + r - 1\choose m}}$
to denote the vector containing $P^{(\mathbf{i})}(\mathbf{x})$ for all $\mathbf{i}$ so that $\wt(\mathbf{i}) < r$.   We record a few useful (well-known) properties of Hasse derivatives below (see \cite{HKT08}). 
\begin{proposition}[Properties of Hasse derivatives]
  \label{prop:deriv-1}
  Let $P(\bfX),Q(\bfX)\in\mathbb{F}[\bfX]$ and let $\bfi,\bfj$ be vectors of nonnegative integers. Then
  \begin{enumerate}
  \item $P\ind{\bfi}(\bfX) + Q\ind{\bfi}(\bfX) = (P+Q)\ind{\bfi}(\bfX)$.
  \item $(P\cdot Q)\ind{\bfi}(\bfX) = \sum_{0\le\bfe\le\bfi}^{} P\ind{\bfe}(\bfX)\cdot Q\ind{\bfi-\bfe}(\bfX)$.
  \item $(P\ind{\bfi})\ind{\bfj}(\bfX) = \binom{\bfi + \bfj}{\bfi}P\ind{\bfi+\bfj}(\bfX)$.
  \end{enumerate}
\end{proposition}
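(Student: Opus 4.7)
The plan is to derive all three properties from the single generating-function identity in Definition~\ref{def:deriv}, namely $P(\bfX+\bfZ)=\sum_\bfi P\ind{\bfi}(\bfX)\bfZ^\bfi$, by treating the $\bfZ$-coordinates as formal indeterminates and comparing coefficients. Nothing about finite fields or characteristic $2$ enters, so the argument is really taking place in the polynomial ring $\mathbb{F}[\bfX][\bfZ]=\mathbb{F}[\bfX,\bfZ]$.

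Part (1) will be immediate: the map $P\mapsto P(\bfX+\bfZ)$ is $\mathbb{F}$-linear, so $(P+Q)(\bfX+\bfZ)=P(\bfX+\bfZ)+Q(\bfX+\bfZ)$, and extracting $[\bfZ^\bfi]$ on both sides yields the identity. For Part (2), I would multiply the two expansions $P(\bfX+\bfZ)=\sum_\bfe P\ind{\bfe}(\bfX)\bfZ^\bfe$ and $Q(\bfX+\bfZ)=\sum_\bff Q\ind{\bff}(\bfX)\bfZ^\bff$, use the evident identity $(PQ)(\bfX+\bfZ)=P(\bfX+\bfZ)\cdot Q(\bfX+\bfZ)$, and collect the coefficient of $\bfZ^\bfi$ as a sum over pairs with $\bfe+\bff=\bfi$; this exactly matches the right-hand side of (2).

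The only step requiring a small trick is Part (3). Here I would introduce two fresh vector variables $\bfZ,\bfW$ and compute $P(\bfX+\bfZ+\bfW)$ in two different ways. On the one hand, expanding directly in the combined increment $\bfZ+\bfW$ and using the multi-index binomial expansion $(\bfZ+\bfW)^\bfk=\sum_{\bfi+\bfj=\bfk}\binom{\bfk}{\bfi}\bfZ^\bfi\bfW^\bfj$ (apply the ordinary binomial theorem in each coordinate separately) gives
\begin{align}
P(\bfX+\bfZ+\bfW)=\sum_{\bfi,\bfj}\binom{\bfi+\bfj}{\bfi}P\ind{\bfi+\bfj}(\bfX)\,\bfZ^\bfi\bfW^\bfj.
\end{align}
On the other hand, first treating $\bfX+\bfW$ as the base point and expanding in $\bfZ$, then expanding each coefficient $P\ind{\bfi}(\bfX+\bfW)$ in $\bfW$, gives
\begin{align}
P(\bfX+\bfZ+\bfW)=\sum_\bfi P\ind{\bfi}(\bfX+\bfW)\bfZ^\bfi=\sum_{\bfi,\bfj}(P\ind{\bfi})\ind{\bfj}(\bfX)\,\bfZ^\bfi\bfW^\bfj.
\end{align}
Comparing the coefficient of $\bfZ^\bfi\bfW^\bfj$ in these two expressions yields (3).

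The ``main obstacle'' is purely notational rather than substantive: one has to be comfortable with multi-index binomial coefficients and with the canonical identification $\mathbb{F}[\bfX][\bfZ,\bfW]=\mathbb{F}[\bfX,\bfZ,\bfW]$, so that comparing the coefficient of a mixed monomial $\bfZ^\bfi\bfW^\bfj$ between the two nested expansions in Part (3) is legitimate. Once that bookkeeping is in place, each of the three parts reduces to equating coefficients of a single polynomial identity, and the argument works verbatim over any commutative ring.
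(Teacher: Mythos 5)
The paper does not actually prove Proposition~\ref{prop:deriv-1}; it simply records these as well-known facts and cites~\cite{HKT08}. Your proof is correct and self-contained: Part~(1) follows from linearity of $P\mapsto P(\bfX+\bfZ)$, Part~(2) from multiplying the two expansions and matching $[\bfZ^\bfi]$, and Part~(3) from expanding $P(\bfX+\bfZ+\bfW)$ once via the combined increment $\bfZ+\bfW$ (using the coordinatewise binomial theorem) and once via the nested expansion about $\bfX+\bfW$ then $\bfX$, and equating the coefficient of $\bfZ^\bfi\bfW^\bfj$. This is the standard generating-function derivation of these identities and works over any commutative ring, as you note; it is almost certainly the same argument as in the cited reference, so there is no meaningful divergence from the paper's (implicit) approach.
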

Using the above, we obtain the following useful derivative computation, and we provide a proof in Appendix~\ref{app:A} for completeness.
\begin{proposition}
  \label{prop:deriv-2}
  Let $1\le r<q$ with $q$ a power of 2, and let $P(X)=(X^q-X)^r$. Then, 
  \begin{align}
    P\ind{i}(X) = \left\{
    \begin{matrix}
    \binom{r}{i}(X^q-X)^{r-i} & 0\le i\le r\\
    0 & i > r\\
    \end{matrix}
    \right.
  \end{align}
\end{proposition}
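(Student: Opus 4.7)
The plan is to exploit the additivity of $Q(X) := X^q - X$ in characteristic $2$. Since $q$ is a power of $2$, the Freshman's dream gives $(X+Z)^q = X^q + Z^q$, hence
\[
Q(X+Z) = Q(X) + Q(Z).
\]
Raising to the $r$-th power,
\[
P(X+Z) = \bigl(Q(X) + Q(Z)\bigr)^r = \sum_{j=0}^{r} \binom{r}{j}\, Q(X)^{r-j}\, Q(Z)^j,
\]
so by Definition~\ref{def:deriv}, $P^{(i)}(X)$ is simply the coefficient of $Z^i$ on the right-hand side.

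Next I would expand $Q(Z)^j = (Z^q - Z)^j = \sum_{k=0}^{j} \binom{j}{k}(-1)^{j-k} Z^{j + (q-1)k}$, so the set of $Z$-exponents appearing in $Q(Z)^j$ is $\{j + (q-1)k : 0 \le k \le j\}$, each at least $j$. For $0 \le i \le r$, the only pair $(j,k)$ with $0 \le k \le j \le r$ solving $j + (q-1)k = i$ is $k = 0$, $j = i$: indeed, $r < q$ forces $q-1 \ge r \ge i$, so any $k \ge 1$ already makes the left side exceed $i$. Extracting this unique contribution gives $P^{(i)}(X) = \binom{r}{i}\, Q(X)^{r-i}$, as claimed. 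For $i > r$ (in the natural range $r < i < q$), the parallel argument yields vanishing: $k = 0$ would demand $j = i > r$, which is out of range, and $k \ge 1$ gives $j + (q-1)k \ge q > i$. Hence $P^{(i)}(X) = 0$ in this range.

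The main obstacle here is the edge case $i \ge q$: one can check (e.g.\ $r=1$, $i=q$ gives $P(X+Z) = X^q + Z^q + X + Z$, whose $Z^q$ coefficient is $1$) that the vanishing claim does not hold literally as written for arbitrary $i > r$. I interpret the proposition as implicitly restricted to the natural range $i < q$, which is all that will be invoked downstream in the analysis of multiplicity codes. As a sanity check, one could also induct on $r$ via the Leibniz rule from Proposition~\ref{prop:deriv-1}(2), writing $P = Q \cdot Q^{r-1}$ and reducing the base case $r = 1$ to the statement $\binom{q}{i} \equiv 0 \pmod{2}$ for $0 < i < q$, which follows from Lucas's theorem (Proposition~\ref{prop:lucas}) since the only bits of $q = 2^\ell$ set are the $\ell$-th bit. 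The additivity-based argument above is cleaner, however, because it absorbs this case analysis into the single identity $Q(X+Z) = Q(X) + Q(Z)$.
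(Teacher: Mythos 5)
Your proof is correct (modulo the same implicit restriction to $i<q$ that the paper's own proof silently makes in its final sentence, and which you rightly flag), but it takes a genuinely different route. The paper applies the Leibniz rule (Proposition~\ref{prop:deriv-1}, part 2) to the $r$-fold product $(X^q-X)\cdots(X^q-X)$, computes the individual Hasse derivatives $D^{(j)}(X^q-X)$ for each $j$ (finding $D^{(1)}=1$ and $D^{(j)}=\binom{q}{j}X^{q-j}=0$ for $2\le j<q$ by Lucas), and then counts the surviving tuples $(j_1,\dots,j_r)$ with every $j_k\in\{0,1\}$. You instead exploit the additivity of $Q(X)=X^q-X$ in characteristic $2$, i.e.\ $Q(X+Z)=Q(X)+Q(Z)$, expand $(Q(X)+Q(Z))^r$ by a single top-level binomial theorem, and read off the $Z^i$ coefficient by tracking which exponents $j+(q-1)k$ can equal $i$. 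Both arguments rest on the same underlying fact --- that $\binom{q}{j}$ is even for $0<j<q$, which is exactly what makes the Freshman's dream hold --- but your packaging replaces the multi-index Leibniz sum with one binomial expansion, so the $\binom{r}{i}$ coefficient for $i\le r$ and the vanishing for $r<i<q$ both drop out immediately. Your explicit observation that the proposition as literally stated fails at $i=q$ (e.g.\ $r=1$ gives $[Z^q]P(X+Z)=1$) is a worthwhile precision that the paper leaves implicit; it does not affect any downstream use, since the proposition is only ever invoked for $i<q$.
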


\subsection{Polynomial local recovery}
A key property exploited by earlier work on multiplicity codes~\cite{KSY14,Kopparty15} is that $f^{(<r)}(\mathbf{x})$ can be recovered from $f^{(<q)}(\mathbf{y})$ for $\mathbf{y}$ that lie on a collection of lines through $\mathbf{x}$. More precisely,
let $\mathcal{L}_m$ be the set of lines $L(T)$ of the form $\bfa T + \bfb$ with $\bfa,\bfb\in\mathbb{F}_q^m$.
Given a multivariate polynomial $P(\bfX)\in\mathbb{F}_q[X_1,\dots,X_m]$, if $L$ is the line $\bfa T + \bfb$, let $P_L(T)\in\mathbb{F}_q[T]$ denote the univariate polynomial $P(\bfa T + \bfb)$.
Let $\mathcal{L}$ be the set of lines in $\mathbb{F}_q^2$ of the form $L(T) = (T, \alpha T + \beta)$ for $\alpha, \beta \in \mathbb{F}_q$.

For simplicity---and because it is enough for our application to the $t$-DRGP---we will consider only bivariate polynomials in this paper, although (see for example \cite{Kopparty15}) the same basic idea works for any $m$.  We will further specialize to lines in $\mathcal{L}$---that is, lines of the form $L(T) = (T, \alpha T + \beta)$---because it will simplify some computations later in the paper.
With these restrictions,
we can specialize Equation (4) of \cite{Kopparty15} to obtain the following relationship between the derivatives of $P_L(T)$ and the derivatives of $P(X,Y)$.
\begin{lemma}[Follows from, e.g., \cite{KSY14, Kopparty15}]
  \label{lem:local-1}
  Suppose that $L_1,\dots,L_r$ are $r$ lines in $\mathcal{L}$ all passing through a point $(\gamma,\delta)$, with $L_i$ being the line $(T,\alpha_i T+\beta_i)$.
  Then, for all polynomials $P(X,Y)\in\mathbb{F}_q[X,Y]$, the following matrix equality holds for all $i=0,\dots,r-1$.
  \begin{align}
    \label{eq:local}
    \begin{bmatrix}
      P\ind{i}_{L_1}(\gamma) \\
      P\ind{i}_{L_2}(\gamma) \\
      \vdots \\
      P\ind{i}_{L_{i+1}}(\gamma)
    \end{bmatrix}
    \ &= \
    \begin{bmatrix}
      \alpha_1^0 & \alpha_1^1 & \cdots & \alpha_1^{i} \\
      \alpha_2^0 & \alpha_2^1 & \cdots & \alpha_2^{i} \\
      \vdots & \vdots & \ddots & \vdots \\
      \alpha_{i+1}^0 & \alpha_{i+1}^1 & \cdots & \alpha_{i+1}^{i} \\
    \end{bmatrix}
    \begin{bmatrix}
      P\ind{i,0}(\gamma,\delta) \\
      P\ind{i-1,1}(\gamma,\delta) \\
      \vdots \\
      P\ind{0,i}(\gamma,\delta)
    \end{bmatrix}.
  \end{align}
\end{lemma}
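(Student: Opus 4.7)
\medskip

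\noindent\textbf{Proof proposal.} The plan is to derive the matrix identity by a direct Taylor expansion using the definition of Hasse derivatives, applied one line at a time; once the expansion is computed for a single line $L = L_j$, the $j$-th row of \eqref{eq:local} will simply fall out.

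First I would fix a line $L(T) = (T, \alpha T + \beta)$ in $\mathcal{L}$ passing through $(\gamma,\delta)$, so that $\delta = \alpha\gamma + \beta$, and rewrite
\[
L(\gamma + Z) \;=\; (\gamma + Z,\ \delta + \alpha Z).
\]
By Definition~\ref{def:deriv} applied to the bivariate polynomial $P(X,Y)$, expanded at the point $(\gamma,\delta)$, we have
\[
P(\gamma + Z,\ \delta + W) \;=\; \sum_{a,b \ge 0} P^{(a,b)}(\gamma,\delta)\, Z^a W^b.
\]
Substituting $W = \alpha Z$ and collecting terms by the total power of $Z$ gives
\[
P_L(\gamma + Z) \;=\; \sum_{i \ge 0} Z^{i} \sum_{b=0}^{i} \alpha^{b}\, P^{(i-b,\,b)}(\gamma,\delta).
\]

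The next step is to extract $P_L^{(i)}(\gamma)$. By the univariate version of Definition~\ref{def:deriv}, $P_L^{(i)}(\gamma)$ is exactly the coefficient of $Z^i$ above, namely
\[
P_L^{(i)}(\gamma) \;=\; \sum_{b=0}^{i} \alpha^{b}\, P^{(i-b,\,b)}(\gamma,\delta).
\]
Applying this identity to each line $L_j$ ($j = 1, \ldots, i+1$), which has slope $\alpha_j$ and passes through $(\gamma,\delta)$, yields precisely the $j$-th row of the claimed matrix equation~\eqref{eq:local}: the row is $(\alpha_j^0, \alpha_j^1, \ldots, \alpha_j^i)$ and the column vector of unknowns is $(P^{(i,0)}(\gamma,\delta),\, P^{(i-1,1)}(\gamma,\delta),\, \ldots,\, P^{(0,i)}(\gamma,\delta))^{T}$. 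Stacking the $i+1$ rows gives \eqref{eq:local}.

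There is no real obstacle here; the proof is a bookkeeping exercise. The only spot to be careful is the substitution $W = \alpha Z$ inside the formal expansion: this step is valid precisely because Hasse derivatives satisfy the identity $P(\mathbf{x}+\mathbf{z}) = \sum_{\mathbf{i}} P^{(\mathbf{i})}(\mathbf{x}) \mathbf{z}^{\mathbf{i}}$ as a polynomial identity in $\mathbf{z}$, so we may specialize $\mathbf{z} = (Z, \alpha Z)$ without worrying about characteristic issues (and in particular no factorials appear, in contrast with the classical derivative). This is the feature that makes the computation go through cleanly over $\mathbb{F}_q$.
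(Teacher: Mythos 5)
Your proof is correct and is essentially the same argument as the paper's: both expand $P$ via the defining identity $P(\mathbf{x}+\mathbf{z}) = \sum_{\mathbf{i}} P^{(\mathbf{i})}(\mathbf{x})\mathbf{z}^{\mathbf{i}}$, specialize $\mathbf{z}$ to a scalar multiple of the line's direction vector $(1,\alpha_j)$, and read off the coefficient of $Z^i$ to obtain $P_{L_j}^{(i)}(\gamma) = \sum_{b=0}^{i}\alpha_j^{b}\,P^{(i-b,b)}(\gamma,\delta)$, which is row $j$ of the system. The only cosmetic difference is that the paper keeps a free parameter $T$ and substitutes $T=\gamma$ at the end, whereas you expand around $\gamma$ from the start.
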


When lines $L_1, \ldots, L_i$ are distinct, 
the middle matrix in \eqref{eq:local} is a Vandermonde matrix, and Vandermonde matrices are invertible in polynomial time.
Hence, we immediately have the following corollary.

\begin{corollary}
  \label{lem:local-2}
  Suppose that $L_1,\dots,L_r$ are $r$ distinct lines of the form $L_k(T) = (T, \alpha_k T + \beta_k)$ 
all passing through a point $(\gamma,\delta)\in\mathbb{F}_q^2$.
  For a polynomial $P(X,Y)\in\mathbb{F}_q[X,Y]$, given the polynomials $P_{L_1}(T),\dots,P_{L_k}(T)$, the derivatives $P\ind{\bfi}(\gamma,\delta)$ are uniquely determined and computable efficiently for all $\bfi$ such that $\wt(\bfi) < r$.
\end{corollary}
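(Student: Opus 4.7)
The plan is to derive Corollary~\ref{lem:local-2} directly from Lemma~\ref{lem:local-1} by inverting the Vandermonde matrices that appear in~\eqref{eq:local}, so the only real content is to verify that those matrices are nonsingular and that the corresponding linear systems can be solved efficiently.

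First I would fix $i \in \{0, 1, \dots, r-1\}$ and pick any $i+1$ of the $r$ given lines, say $L_1, \dots, L_{i+1}$. Since each $L_k$ has the form $(T, \alpha_k T + \beta_k)$ and all pass through $(\gamma, \delta)$, we have $\beta_k = \delta - \alpha_k \gamma$, so the line $L_k$ is determined by $\alpha_k$ alone. Consequently, distinctness of the lines $L_1, \dots, L_r$ is equivalent to distinctness of the slopes $\alpha_1, \dots, \alpha_r \in \mathbb{F}_q$. In particular, $\alpha_1, \dots, \alpha_{i+1}$ are pairwise distinct.

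Next I would invoke Lemma~\ref{lem:local-1} applied to these $i+1$ lines. The middle matrix in~\eqref{eq:local} is the $(i+1)\times(i+1)$ Vandermonde matrix $V = (\alpha_k^{j})_{1 \le k \le i+1,\, 0 \le j \le i}$. Since the $\alpha_k$ are distinct, $V$ is invertible over $\mathbb{F}_q$, and the left-hand side of~\eqref{eq:local} is computable from the given univariate polynomials $P_{L_k}(T)$ (each $P_{L_k}^{(i)}(\gamma)$ is just the coefficient of $Z^i$ in $P_{L_k}(\gamma + Z)$, a trivial calculation from the coefficients of $P_{L_k}$). Multiplying the left-hand side by $V^{-1}$ recovers the vector of bivariate Hasse derivatives $P^{(i-j,\,j)}(\gamma,\delta)$ for $j = 0, \dots, i$, which covers exactly the $\bfi$ with $\wt(\bfi) = i$.

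Ranging over $i = 0, 1, \dots, r-1$ then recovers $P^{(\bfi)}(\gamma,\delta)$ for every $\bfi$ with $\wt(\bfi) < r$, which proves uniqueness. For efficiency, each such step involves (i) computing Hasse derivatives of the $P_{L_k}$ at $\gamma$, and (ii) solving a Vandermonde linear system of size $O(r)$; both can be done in $\mathrm{poly}(r, \log q)$ time by standard algorithms. There is essentially no obstacle here: the only point worth flagging is the slope-distinctness observation in the first step, which is what converts ``distinct lines through a common point'' into ``Vandermonde matrix with distinct nodes,'' and this is exactly why the lemma was specialized to lines of the form $(T,\alpha T + \beta)$ in the first place.
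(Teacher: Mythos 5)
Your proof is correct and takes essentially the same route as the paper, which simply remarks before the corollary that the middle matrix of \eqref{eq:local} is an invertible Vandermonde matrix and solves for the bivariate derivatives. You do usefully spell out the one detail the paper leaves implicit—that distinctness of the lines $L_k$ through the common point $(\gamma,\delta)$ forces the slopes $\alpha_k$ to be distinct (since $\beta_k = \delta - \alpha_k\gamma$), which is exactly what makes the Vandermonde matrix nonsingular—but this is a clarification, not a different argument.
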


\section{Lifted multiplicity codes}\label{sec:construction}

In this section, we define lifted multiplicity codes.  As noted in the introduction, we restrict our attention to bivariate codes because this is enough for our application to the $t$-DRGP.  However, everything in this section extends to general $m$-variate codes.  We define bivariate lifted multiplicity codes as the vectors $(f^{(<r)}(\mathbf{x}))_{\mathbf{x} \in \mathbb{F}_q^2}$ for polynomials $f(X)$ that live in the span of ``good'' monomials.  In order to define these ``good'' monomials, we need a few more definitions.

\subsection{Polynomial equivalence}
We first define a notion of polynomial equivalence.

\begin{definition}
  We say that two univariate polynomials $A(X),B(X)\in\mathbb{F}_q[X]$ are \emph{equivalent up to order $r$}, written $A\equiv_r B$, if $A\ind{i}(\gamma)=B\ind{i}(\gamma)$ for all $i=0,\dots,r-1$ and $\gamma\in\mathbb{F}_q$.
\end{definition}
It is easy to see that the above definition does in fact give an equivalence relation.
We now present two standard results regarding this equivalence relation.
The first is a characterization of this equivalence.
\begin{lemma}
  \label{lem:equiv}
  For $A(X), B(X)\in\mathbb{F}_q[X]$ we have $A(X)\equiv_rB(X)$ if and only if $(X^q-X)^r|A(X)-B(X)$.
\end{lemma}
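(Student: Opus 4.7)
The plan is to set $C(X)=A(X)-B(X)$ and rephrase the claim as: every element of $\mathbb{F}_q$ is a zero of multiplicity at least $r$ of $C$ (in the Hasse-derivative sense) if and only if $(X^q-X)^r \mid C(X)$. Since $X^q-X=\prod_{\gamma\in\mathbb{F}_q}(X-\gamma)$ factors into distinct linear factors, the two halves of the statement correspond to showing that divisibility by $(X^q-X)^r$ and by the individual $(X-\gamma)^r$ coincide, and that each $(X-\gamma)^r$ divisibility is captured by the vanishing of the first $r$ Hasse derivatives at $\gamma$.

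For the ``if'' direction, assume $C(X)=(X^q-X)^r Q(X)$ for some $Q\in\mathbb{F}_q[X]$. Fix $i<r$. Applying the Leibniz rule for Hasse derivatives (Proposition~\ref{prop:deriv-1}, part 2) with $P(X)=(X^q-X)^r$, I get
\[
C\ind{i}(X)=\sum_{0\le e\le i} P\ind{e}(X)\,Q\ind{i-e}(X).
\]
By Proposition~\ref{prop:deriv-2}, $P\ind{e}(X)=\binom{r}{e}(X^q-X)^{r-e}$ for $0\le e\le r$. Since $e\le i<r$ forces $r-e\ge 1$, each term $P\ind{e}(X)$ is a multiple of $X^q-X$, hence vanishes at every $\gamma\in\mathbb{F}_q$. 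This yields $C\ind{i}(\gamma)=0$ for all $i<r$ and all $\gamma\in\mathbb{F}_q$, i.e.\ $A\equiv_r B$.

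For the ``only if'' direction, assume $A\equiv_r B$, so $C\ind{i}(\gamma)=0$ for $0\le i<r$, $\gamma\in\mathbb{F}_q$. I will use the standard fact (a direct consequence of Definition~\ref{def:deriv}, expanding $C(X+\gamma)$ as a polynomial in a shifted variable) that a polynomial $C\in\mathbb{F}_q[X]$ satisfies $C\ind{0}(\gamma)=\cdots=C\ind{r-1}(\gamma)=0$ if and only if $(X-\gamma)^r\mid C(X)$: indeed $C(X+\gamma)=\sum_i C\ind{i}(\gamma)X^i$, which is divisible by $X^r$ exactly when these coefficients vanish, and this divisibility is equivalent to $(X-\gamma)^r\mid C(X)$ via the substitution $X\mapsto X-\gamma$. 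Applying this at every $\gamma\in\mathbb{F}_q$ gives $(X-\gamma)^r\mid C(X)$ for each $\gamma$. Because the polynomials $\{(X-\gamma)^r : \gamma\in\mathbb{F}_q\}$ are pairwise coprime in $\mathbb{F}_q[X]$, their product $\prod_{\gamma\in\mathbb{F}_q}(X-\gamma)^r=(X^q-X)^r$ also divides $C(X)$, as desired.

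The only non-mechanical ingredient is the Hasse-derivative characterization of multiplicity used in the reverse direction; this is entirely standard and follows in one line from Definition~\ref{def:deriv}, so I expect no real obstacle. The forward direction is a direct calculation once Propositions~\ref{prop:deriv-1} and \ref{prop:deriv-2} are in hand, and the whole proof is short.
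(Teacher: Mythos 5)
Your proof is correct and follows essentially the same route as the paper's: reduce to $C=A-B$ vs.\ the zero polynomial, use the Leibniz rule together with Proposition~\ref{prop:deriv-2} for the ``if'' direction, and for the ``only if'' direction use the Taylor-type expansion $C(X)=\sum_i C\ind{i}(\gamma)(X-\gamma)^i$ to get $(X-\gamma)^r\mid C$ for every $\gamma$, then combine the pairwise coprime factors. The only cosmetic difference is that you phrase the reverse direction via $C(X+\gamma)$ and divisibility by $X^r$ rather than directly via $(X-\gamma)^r$, but these are the same statement after the substitution $X\mapsto X-\gamma$.
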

\begin{proof}
  By considering the polynomial $A(X)-B(X)$, it suffices to prove $A(X)$ is equivalent to the zero polynomial up to order $r$ if and only if $(X^q-X)^r|A(X)$.
  If $A(X) = (X^q-X)^rC(X)$ for some polynomial $C(X)\in\mathbb{F}_q[X]$, then, by part 2 of Proposition~\ref{prop:deriv-1} and Proposition~\ref{prop:deriv-2}, for $0\le i < r$, we have $X^q-X|A\ind{i}(X)$, so $A\ind{i}(\gamma)=0$ for all $0\le i<r$ and all $\gamma\in \mathbb{F}_q$, so $A(X)\equiv_r0$.
  
  Conversely, suppose that $A(X)\equiv_r 0$.
  By the definition of Hasse derivatives, we have $A(X) = A(\gamma + (X-\gamma)) = \sum_{i}^{} A\ind{i}(\gamma)(X-\gamma)^i$.
  Since $A\ind{i}(\gamma)=0$ for $i=0,\dots,r-1$, we have $(X-\gamma)^r|A(X)$.
  Thus is true for all $\gamma$, so $\prod_{\gamma}^{} (X-\gamma)^r|A(X)$, so $(X^q-X)^r|A(X)$.
\end{proof}
Lemma~\ref{lem:equiv} gives the following corollary.
\begin{lemma}
 Let $q$ be a power of 2 and $r\ge 1$.
 For every univariate polynomial $A(X)$, there exists a unique degree-at-most $rq-1$ polynomial $B(X)$ such that $A(X)\equiv_rB(X)$.
 Furthermore, if $r$ is a power of 2, then for all $a$ such that $\deg A - (qr-r) < a < qr$, we have $[X^a]A(X) = [X^a]B(X)$.
\label{lem:mult-2}
\end{lemma}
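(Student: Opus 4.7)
}
The existence and uniqueness of $B$ will follow from polynomial long division combined with Lemma~\ref{lem:equiv}. For existence, I would divide $A(X)$ by $(X^q-X)^r$ to obtain
\[
A(X) = (X^q-X)^r Q(X) + B(X)
\]
with $\deg B \le qr-1$; then Lemma~\ref{lem:equiv} immediately gives $A\equiv_r B$. For uniqueness, if $B_1,B_2$ both have degree at most $qr-1$ and are equivalent to $A$ up to order $r$, then $B_1 - B_2\equiv_r 0$, so by Lemma~\ref{lem:equiv} we have $(X^q-X)^r \mid (B_1-B_2)$; since $\deg(B_1-B_2)\le qr-1 < qr$, this forces $B_1 = B_2$.

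For the ``furthermore'' clause, the key observation is that when $r$ is a power of $2$ and $q$ is a power of $2$, the Frobenius (``freshman's dream'') identity in characteristic $2$ collapses the binomial expansion of $(X^q-X)^r$ into only two terms:
\[
(X^q-X)^r = X^{qr} - X^r
\]
(over $\mathbb{F}_q$). Using the division identity above, this gives
\[
A(X) - B(X) = (X^{qr} - X^r) Q(X),
\]
so $[X^a](A-B) = [X^{a-qr}]Q - [X^{a-r}]Q$.

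To finish, I would bound $\deg Q$. If $\deg A < qr$ then $A = B$ and there is nothing to prove; otherwise $\deg Q = \deg A - qr$, since $\deg B < qr \le \deg A$. Now fix $a$ with $\deg A - (qr-r) < a < qr$. The first coefficient $[X^{a-qr}]Q$ vanishes because $a - qr < 0$, and the second coefficient $[X^{a-r}]Q$ vanishes because $a - r > \deg A - qr = \deg Q$. Therefore $[X^a]A = [X^a]B$, as claimed.

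The only nontrivial step is the Frobenius simplification of $(X^q-X)^r$; without the hypothesis that $r$ is a power of $2$, the full binomial expansion would produce intermediate terms $\binom{r}{i}X^{q(r-i)+i}$ of degree strictly between $r$ and $qr$, which could contribute to coefficients of $B$ in the stated range and invalidate the conclusion. So the ``furthermore'' really uses the power-of-$2$ hypothesis in an essential way, and this is the only place in the argument where it enters.
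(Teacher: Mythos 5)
Your proof is correct and takes essentially the same approach as the paper: existence and uniqueness via division by $(X^q-X)^r$ together with Lemma~\ref{lem:equiv}, and for the furthermore clause the observation that $(X^q-X)^r = X^{qr}+X^r$ in characteristic $2$ when $r$ is a power of $2$. Your coefficient-by-coefficient analysis via the quotient $Q$ is a slightly more formal rendering of the paper's informal remark about which terms get subtracted during the division, but it is the same argument.
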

\begin{proof}
  For existence of $B(X)$, note that, by Lemma~\ref{lem:equiv}, we can take $B(X)$ to be the remainder when $A(X)$ is divided by $(X^q-X)^r$.
  For uniqueness of $B(X)$, suppose that $B_1(X)$ and $B_2(X)$ are equivalent to $A(X)$ up to order $r$ and are of degree at most $rq-1$.  
  By Lemma~\ref{lem:equiv}, we have $(X^q-X)^r|B_1(X)-B_2(X)$.
  Additionally, $B_1(X)-B_2(X)$ has degree at most $rq-1$, so $B_1(X)-B_2(X)=0$.

  Now suppose $r$ is a power of 2.
  Then $(X^q-X)^r = X^{rq}+X^r$.
  Above, to obtain $B(X)$ from $A(X)$, we need only to subtract terms of the form $X^{qr}+X^r, X^{qr+1}+X^{r+1},\dots,X^{\deg A}+X^{\deg A - qr+r}$.
  Thus, for $a$ such that $\deg A - qr+r < a < qr$, the coefficients of $X^a$ in $A(X)$ and $B(X)$ are equal.
\end{proof}

\subsection{Type-$r$ polynomials}

Define
  the order-$r$ evaluation map $\eval_{q,r}:\mathbb{F}_q[X,Y]\to \left(\mathbb{F}_q^{\binom{r+1}{2}}\right)^{q^2}$ by
  \begin{align}
    \eval_{q,r}(P) := (P^{(<r)}(\mathbf{x}))_{\mathbf{x} \in \F_q^2},
  \end{align}

We will want to restrict our attention to a subset of monomials $M(X,Y) = X^aY^b$ whose order-$r$ evaluations $\eval_{q,r}(M)$ form a basis for the space $\{\eval_{q,r}(P) \,:\, P \in \F_q[X,Y] \}$.  To that end, we introduce the following definition.

\begin{definition}[Type-$r$ monomials]
  Call a monomial $X^aY^b$ \emph{type-$r$} if $\floor{a/q}+\floor{b/q}\le r-1$.
Let $\mathcal{F}_{q,r}$ be the family of polynomials $P\in\mathbb{F}_q[X,Y]$ that are spanned by type-$r$ monomials.
\end{definition}
It is easy to see that $\mathcal{F}_{q,r}$ is a dimension $\binom{r+1}{2}q^2$ vector space over $\mathbb{F}_q$.
We now show that the type-$r$ polynomials form a basis for bivariate polynomials, up to order $r$ equivalence.
We note that Lemma III.1 of \cite{Wu15} claims a similar statement, with a different argument.
\begin{lemma}\label{lem:eval}
  The evaluation map $\eval_{q,r}:\mathcal{F}_{q,r}\to \left(\mathbb{F}_q^{\binom{r+1}{2}}\right)^{q^2}$
  is a bijection.
\label{lem:mult-1}
\end{lemma}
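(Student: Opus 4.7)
The plan is to reduce the question to a dimension count combined with a direct sum decomposition. Since both $\mathcal{F}_{q,r}$ and $\left(\mathbb{F}_q^{\binom{r+1}{2}}\right)^{q^2}$ have dimension $\binom{r+1}{2}q^2$ over $\mathbb{F}_q$, it suffices to show that the restriction of $\eval_{q,r}$ to $\mathcal{F}_{q,r}$ is injective. I would prove injectivity by embedding $\mathcal{F}_{q,r}$ inside the larger space $V := \mathbb{F}_q[X,Y]_{<rq,\,<rq}$ (of dimension $r^2q^2$) and establishing a decomposition $V = \mathcal{F}_{q,r} \oplus K_r$, where $K_r := \ker\bigl(\eval_{q,r}|_V\bigr)$.

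To compute $\dim K_r$, I would first consider the ``full'' evaluation $\widetilde{\eval}: V \to \bigl(\mathbb{F}_q^{r^2}\bigr)^{q^2}$ defined by $\widetilde{\eval}(P) = \bigl(P\ind{i_1,i_2}(\gamma,\delta)\bigr)_{0 \le i_1,i_2 < r,\,(\gamma,\delta)\in\mathbb{F}_q^2}$. Under the identification $V \cong \mathbb{F}_q[X]_{<rq} \otimes \mathbb{F}_q[Y]_{<rq}$, the map $\widetilde{\eval}$ agrees on simple tensors $A(X)B(Y)$ with the tensor product of two copies of the univariate evaluation $A \mapsto (A^{(<r)}(\gamma))_{\gamma}$, which is a bijection on $\mathbb{F}_q[X]_{<rq}$ by Lemma~\ref{lem:mult-2} plus a dimension count. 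Hence $\widetilde{\eval}$ is a bijection. Since $\eval_{q,r}|_V$ is the composition of $\widetilde{\eval}$ with the projection onto the $\binom{r+1}{2}q^2$ coordinates with $i_1+i_2<r$, it is surjective and $\dim K_r = r^2q^2 - \binom{r+1}{2}q^2 = \binom{r}{2}q^2$.

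The main step is to prove $V = \mathcal{F}_{q,r} + K_r$ by an explicit reduction of monomials. Given a non-type-$r$ monomial $X^aY^b \in V$, so $\alpha+\beta \ge r$ with $\alpha := \lfloor a/q\rfloor$ and $\beta := \lfloor b/q\rfloor$, I would pick integers $i,j \ge 0$ with $i+j = r$, $i \le \alpha$, $j \le \beta$, and consider
\[
h(X,Y) := X^{a-iq}Y^{b-jq}(X^q-X)^i(Y^q-Y)^j \in V.
\]
Using the Hasse product rule of Proposition~\ref{prop:deriv-1} (and the fact that $(X^q-X)^i$ vanishes to order $i$ at every point and $(Y^q-Y)^j$ to order $j$), the product vanishes to order $i+j=r$ at every grid point, so $h \in K_r$. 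Expanding the two factors and then multiplying by $X^{a-iq}Y^{b-jq}$ gives $h = X^aY^b + R$, where every cross-term monomial in $R$ has strictly smaller degree in $X$ or $Y$ than $X^aY^b$, and so strictly smaller total degree. Thus $X^aY^b \equiv -R \pmod{K_r}$; iterating by induction on total degree (the base case being monomials of degree $0$, which are type-$r$) expresses every monomial in $V$ as a type-$r$ polynomial plus an element of $K_r$.

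Combining the pieces, $\dim\mathcal{F}_{q,r}+\dim K_r = \binom{r+1}{2}q^2 + \binom{r}{2}q^2 = r^2q^2 = \dim V$, so the sum $V = \mathcal{F}_{q,r} + K_r$ is direct. In particular $\mathcal{F}_{q,r} \cap K_r = \{0\}$, which is exactly injectivity of $\eval_{q,r}|_{\mathcal{F}_{q,r}}$; bijectivity then follows from the matching dimensions of the two sides. The main obstacle is the third paragraph: one must carefully verify that after the reduction the remainder $R$ genuinely has strictly smaller total degree, which relies on the fact that every non-leading term of $(X^q-X)^i(Y^q-Y)^j$ is missing at least one power of $q$ in either $X$ or $Y$.
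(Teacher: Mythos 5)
Your proof is correct and takes a genuinely different route from the paper. The paper establishes injectivity by induction on $r$: it shows that a type-$(r+1)$ polynomial whose derivatives of weight $\le r$ vanish everywhere is divisible by $Y^q - Y$, factors it as $(Y^q-Y)Q$ with $Q$ type-$r$, uses the derivative product rule to deduce that $Q$'s derivatives of weight $\le r-1$ vanish everywhere, and then applies the inductive hypothesis. Your argument is instead a direct-sum decomposition of the ambient box $V = \mathbb{F}_q[X,Y]_{<rq,<rq}$: you identify the ``full derivative grid'' map $\widetilde{\eval}$ with the tensor product of two univariate bijections (which reduces the bivariate injectivity question to the univariate one once and for all, without induction), compute $\dim K_r = \binom{r}{2}q^2$ by rank-nullity, and then show $V = \mathcal{F}_{q,r} + K_r$ by an explicit monomial reduction using elements $X^{a-iq}Y^{b-jq}(X^q-X)^i(Y^q-Y)^j \in K_r$ with $i+j=r$. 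Both proofs are sound. What yours buys is an explicit description of a basis-like family for the kernel $K_r$, and the tensor-product argument avoids the slightly fiddly step in the inductive proof of tracking what happens to the type-$r$ property under division by $Y^q-Y$; what it costs is that you have to carry through the total-degree bookkeeping in the reduction step, and your argument is somewhat longer. One small clarification worth making in your third paragraph: a cross-term of $R$ has $X$-degree $a - (i-k_1)(q-1)$ and $Y$-degree $b - (j-k_2)(q-1)$, so the degree in each variable is nonincreasing and at least one is strictly decreasing; it is this monotonicity in both coordinates, not merely ``strictly smaller in $X$ or $Y$,'' that lets you conclude the total degree strictly drops. Also note the induction can simply terminate whenever a type-$r$ monomial is reached rather than descending to degree $0$, though your stated base case is fine.
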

\begin{proof}[Proof of Lemma~\ref{lem:mult-1}]
  Since $\eval_{q,r}$ is a linear map and $\mathcal{F}_{q,r}$ and $\mathbb{F}_{q}^{\binom{r+1}{2}q^2}$ have the same $\mathbb{F}_q$ dimension, it suffices to prove the map has trivial kernel.
  We prove by induction.

  \textbf{Base Case: $r=1$.} Suppose $P\in\mathcal{F}_{q,1}$ and $\eval_{q,1}(P)$ is the 0-vector. Then $P(X,Y)=0$ for all $X,Y$.
  For any $\delta\in \mathbb{F}_q$, the polynomial $P(X,\delta)\in\mathbb{F}_q[X]$ has degree at most $q-1$ but has $q$ roots, so the polynomial must be 0.
  Hence, $(Y-\delta)|P(X,Y)$ for all $\delta$, so $(Y^q-Y)|P(X,Y)$, which implies $P=0$.
  This proves that $\eval_{q,1}$ has trivial kernel.

  \textbf{Inductive step:} 
  Assume $r\ge 1$ and $\eval_{q,r}$ has trivial kernel.
  We prove that $\eval_{q,r+1}$ has trivial kernel. 

  Assume $P(X,Y)$ is a polynomial spanned by type-$(r+1)$ monomials with all $\bfi$th derivatives equal to 0 for $\wt(\bfi)<r+1$.
  Let $\delta\in\mathbb{F}_q$ and $B_\delta(X)\defeq P(X,\delta)$.
  Then, for $0\le i < r$, we have $B_\delta\ind{i}(\gamma) = P\ind{i,0}(\gamma,\delta) = 0$ for all $\gamma\in\mathbb{F}_q$.
  Hence, for all $\gamma\in\mathbb{F}_q$, we have $(X-\gamma)^r|B_\delta(X)$.
  Hence, $(X^q-X)^r|B_\delta(X)$.
  Since $\deg B_\delta(X) \le \deg_XP(X,Y) < qr$ for all $\delta$, we have $B_\delta(X)=0$.
  Thus, $P(X,\delta)$ is the 0 polynomial for all $\delta$, so $(Y-\delta)|P(X,Y)$ for all $\delta$, so $(Y^q-Y)|P(X,Y)$.
  Hence, we may write $P(X,Y)=(Y^q-Y)Q(X,Y)$ for some polynomial $Q(X,Y)\in \mathbb{F}_q[X,Y]$.

  As polynomial $P$ is type-$(r+1)$, polynomial $Q$ is type-$r$: if $Q$ had a nonzero coefficient for $X^aY^b$ with $\floor{a/q}+\floor{b/q}>r-1$, then the coefficient $X^{a}Y^{b+q}$ is nonzero in $P$, which is a contradiction.
  For all $i,j$ with $i\ge 0,j\ge 1$ and $i+j\le r$, we have
  \begin{align}
    P^{(i,j)}(X,Y) 
    \ &= \   (Y^q-Y)Q\ind{i,j}(X,Y)  - Q\ind{i,j-1}(X,Y).
  \end{align}
  Here we applied part 2 of Proposition~\ref{prop:deriv-1} and the $r=1$ case of Proposition~\ref{prop:deriv-2}.
  At every $X$ and $Y$, the left side is 0 by assumption on $P$ and the right side $Q\ind{i,j-1}(X,Y)$.
  We conclude that $Q\ind{i',j'}$ evaluates to 0 everywhere for every nonnegative $i'$ and $j'$ satisfying $i'+j'\le r-1$.
  Since $Q$ is type-$r$, we have $Q=0$ by the induction hypothesis, so $P=0$.
  This completes the induction, completing the proof.
\end{proof}

\subsection{Definition of lifted multiplicity codes}

Finally we are ready to define lifted multiplicity codes, which we define as the set of evaluations $\eval_{q,r}(P)$ of polynomials whose restrictions to lines\footnote{To simplify calculations, we consider restrictions to lines of the form $L(T) = (T, \alpha T + \beta)$.  That is, we do not include lines of the form $L(T)=(\alpha, T)$.} are equivalent, up to order $r$, to a low degree polynomial:
\begin{definition}[Lifted multiplicity codes, first definition]\label{def:lmc2}
  The \emph{$(q,r,d)$ (bivariate) lifted multiplicity code} is a code $\mathcal{C}$ over alphabet $\Sigma=\mathbb{F}_q^{\binom{r+1}{2}}$ of length $q^2$ given by
\[ \mathcal{C} = \left\{ \eval_{q,r}(P) \,:\, \text{ \begin{minipage}{8cm} \begin{center} $P \in \F_q[X,Y]$ and, for any $L(T)\in\mathcal{L}$, $P(L(T)) \equiv_r Q(T)$ for some $Q \in \F_q[T]$ of degree less than $d$. \end{center} \end{minipage}} \right\} \]
\end{definition}

Definition~\ref{def:lmc2} is natural but difficult to get a handle on directly.
Following the approach of previous work \cite{GuoKS13, FischerGW17}, 
we show that lifted multiplicity code contains the set of vectors $\eval_{q,r}(P)$ for $P$ that lie in the span of a set of ``good'' monomials, which makes it easier to bound the rate.  Informally, a monomial is $(q,r,d)$-good if its restriction along every line is equivalent, up to order $r$, to a polynomial of degree less than $d$.

\begin{definition}[$(q,r,d)$-good monomials]\label{def:good}
  Call a monomial $M_{a,b}(X,Y) = X^aY^b\in\mathbb{F}_q[X,Y]$ \emph{$(q,r,d)$-good} (or simply good, when $r$ and $d$ are understood) if it is type-$r$ and for every line $(T,\alpha T+\beta)\in\mathcal{L}$, the univariate polynomial $M_{a,b}(T,\alpha T+\beta)$ is equivalent, up to order $r$, to polynomial of degree less than $d$, and call it \emph{$(q,r,d)$-bad} otherwise.
\end{definition}

By definition all good monomials lie in our lifted multiplicity code, so to lower bound the rate of the code it suffices to lower bound the number of good monomials.
\begin{lemma}
  \label{lem:good}
  Let $\mathcal{C}$ be the bivariate $(q,r,d)$ lifted multiplicity code.
  Then, for every $(q,r,d)$-good monomial $M(X,Y)$, $\eval_{q,r}(M)\in \mathcal{C}$, and the rate of $\mathcal{C}$ is at least $\frac{\#\text{$(q,r,d)$-good monomials}}{\binom{r+1}{2}q^2}$.
\end{lemma}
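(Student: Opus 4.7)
The proof splits into two parts that correspond to the two assertions of the lemma.

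For the first part, I would simply unwind definitions. Fix a $(q,r,d)$-good monomial $M(X,Y)$. By Definition~\ref{def:good} it is type-$r$, so $M \in \mathcal{F}_{q,r}$; and for every line $L(T) = (T, \alpha T + \beta) \in \mathcal{L}$, the univariate polynomial $M(L(T))$ is equivalent up to order $r$ to a polynomial of degree less than $d$. These are exactly the conditions required by Definition~\ref{def:lmc2} to conclude $\eval_{q,r}(M) \in \mathcal{C}$.

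For the second part, I would first observe that $\mathcal{C}$ is an $\mathbb{F}_q$-linear subspace of $(\mathbb{F}_q^{\binom{r+1}{2}})^{q^2}$. This uses two facts: $\eval_{q,r}$ is $\mathbb{F}_q$-linear (immediate from part 1 of Proposition~\ref{prop:deriv-1}), and the set of polynomials $P$ meeting the defining condition of $\mathcal{C}$ is closed under $\mathbb{F}_q$-linear combinations (since if $P_1(L(T)) \equiv_r Q_1(T)$ and $P_2(L(T)) \equiv_r Q_2(T)$ for $Q_i$ of degree less than $d$, then $\lambda_1 P_1 + \lambda_2 P_2$ restricted to $L$ is equivalent up to order $r$ to $\lambda_1 Q_1 + \lambda_2 Q_2$, which again has degree less than $d$). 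Hence the $\mathbb{F}_q$-span of $\{\eval_{q,r}(M) : M \text{ is good}\}$ is contained in $\mathcal{C}$.

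Next I would show this span has $\mathbb{F}_q$-dimension equal to the number of good monomials. Since every good monomial is type-$r$, the good monomials are a linearly independent subset of $\mathcal{F}_{q,r}$. By Lemma~\ref{lem:mult-1}, $\eval_{q,r}$ restricted to $\mathcal{F}_{q,r}$ is a bijection, hence injective and $\mathbb{F}_q$-linear, so it sends this independent set to an independent set. Therefore $\dim_{\mathbb{F}_q}(\mathcal{C}) \ge \#\{(q,r,d)\text{-good monomials}\}$.

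Finally, I would translate this dimension bound into a rate bound. The alphabet has size $|\Sigma| = q^{\binom{r+1}{2}}$ and the length is $N = q^2$, so
\[
R = \frac{\log_{|\Sigma|} |\mathcal{C}|}{N} = \frac{\dim_{\mathbb{F}_q}(\mathcal{C})}{\binom{r+1}{2} q^2} \ge \frac{\#(q,r,d)\text{-good monomials}}{\binom{r+1}{2} q^2},
\]
completing the proof. There is no genuine obstacle here; the only thing to be careful about is that linearity of $\mathcal{C}$ and the injectivity given by Lemma~\ref{lem:mult-1} together promote ``span $\subseteq \mathcal{C}$'' into a clean lower bound on $\dim_{\mathbb{F}_q}(\mathcal{C})$, which is what the rate computation needs.
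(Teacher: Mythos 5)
Your proof is correct and follows essentially the same route as the paper's: unwind Definition~\ref{def:good} and Definition~\ref{def:lmc2} for membership of good monomials in $\mathcal{C}$, then use linearity of $\mathcal{C}$ together with the injectivity of $\eval_{q,r}$ on $\mathcal{F}_{q,r}$ (Lemma~\ref{lem:mult-1}) to turn the count of good monomials into a lower bound on $\dim_{\mathbb{F}_q}(\mathcal{C})$ and hence on the rate. You spell out the linearity and linear-independence steps in more detail than the paper does, but the underlying argument is identical.
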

\begin{proof}
  The first part follows from the definition of good monomial.
  For the second part, $\mathcal{C}$ is linear and the $\mathbb{F}_q$-span of all good monomials have pairwise distinct evaluations by Lemma~\ref{lem:mult-1}, so $|\mathcal{C}|\ge q^{(\#\text{$(q,r,d)$-good monomials})}$. As $\mathcal{C}$ is a length $q^2$ code over an alphabet of size $|\Sigma|=q^{\binom{r+1}{2}}$, the rate is at least $\frac{\log|\mathcal{C}|}{q^2\log|\Sigma|}=\frac{\#\text{$(q,r,d)$-good monomials}}{\binom{r+1}{2}q^2}$.
\end{proof}

\begin{remark}
\label{rem:err}
A previous version of this paper incorrectly asserted that every codeword of the lifted multiplicity code is spanned by good monomials.
As observed by Nikita Polianskii, this is in fact not true. For example, when $r=2$ and $d=2q-1$, the monomials $X^{2q-2}Y$ and $X^{q-1}Y^{q}$ are not $(q,r,d)$-good as verified by the line $(T,T)$, but their sum $X^{2q-2}Y+X^{q-1}Y^q$ is in the $(q,r,d)$-lifted multiplicity code: the restriction of the sum to a line $(T, \alpha T+\beta)\in\mathcal{L}$ has a $T^{2q-1}$ coefficient of $\alpha + \alpha^q = 0$ and hence has degree strictly less than $d=2q-1$.
\end{remark}

\section{The rate of lifted multiplicity codes}\label{sec:rate}
In this section, we bound the rate (and hence, the redundancy) of lifted multiplicity codes.
Our final result on the rate is Corollary~\ref{cor:mult-6} below, which implies that for $r,q$ and $d$ of an appropriate form, the lifted multiplicity code over order $r$ and degree $d$ over $\F_q$ has rate at least
\[ 1 - \frac{6}{r} \left( r - \frac{d}{q} \right)^{\log_2(4/3)}.\]
In the next section, we choose $d = qr - r$, which will yield a code of rate $1 - \frac{6}{r} \left( \frac{r}{q} \right)^{\log_2(4/3)}$ and will give us Theorem~\ref{thm:main}.

Before we prove this result, we briefly compare our approach to more straightforward ones, and discuss why we are able to do better.

First, we discuss what might be a first strategy building on the analysis of \cite{GuoKS13} for lifted Reed-Solomon codes.  Similarly to that work, we want to show there are few bad monomials.
We can show (after checking some conditions) that a monomial is bad if, restricted to some line, in the resulting univariate polynomial, one of the coefficients of $T^{qr-s},T^{qr-s+1},\dots,T^{qr-1}$ is nonzero.
This corresponds to the analysis of lifted Reed-Solomon codes when $r=s=1$.
For each $s'=1,\dots,s$, similar to the analysis of the lifted Reed-Solomon code, we can bound the number of monomials that could cause the coefficient of $T^{rq-s'}$ to be nonzero by $rq^{\log_2(3)}$.
Using the union bound and
summing these bounds gives a bound $rsq^{\log_2(3)}$ on the number of bad monomials for the lifted multiplicity code.
However, when $r=s$ (the setting we will consider), this gives a rate of $1-q^{\log_2(3/4)}$.  Thus, this yields a code with the same redundancy of $N^{\log_4(3)}$ as the lifted Reed-Solomon code, and we have made no improvement.

In order to do better, the key to our analysis is to observe that monomials that are bad for some $s'$ are likely to be bad for another $s''$, so the union bound is wasteful.
Instead, using some tricks with binary arithmetic (captured in Lemma~\ref{lem:mult-3}), we are able to analyze together all the monomials that make any of the coefficients of $T^{rq-s},\dots,T^{rq-1}$ nonzero, giving a better bound.

Second, we compare our approach to the analysis of \cite{Wu15}, which also studies lifted multiplicity codes, but focuses on a different parameter regime (one where $t$ is much larger).
As described more in Remark~\ref{rem:Wu15}, the approach of \cite{Wu15} does not yield anything better in the parameter regime that we consider ($t \leq \sqrt{N}$) than does the approach described above (or indeed even any better than standard (not lifted) multiplicity codes when $r \gg q^{1/23}$).  
The reason that we are able to do better than the straightforward argument above while the approach of \cite{Wu15} does not is that \cite{Wu15} uses a stricter requirement for a monomial to be good in \cite[Lemma III.3]{Wu15} than we do in our Lemma~\ref{lem:mult-5}.  Thus, the approach of \cite{Wu15} counts a smaller number of good monomials and ends up with a weaker bound on the rate. 

Now, we prove our result.
We begin with a lemma that will be useful.
\begin{lemma}
  Let $s=2^{\ell_s}$ and $q=2^\ell$ with $\ell_s\le \ell$.
  The number of $a_1,b_1\in\{0,1,\dots,q-1\}$ such that at least one of the following is true
  \begin{align}
    q-1-a_1 \ &\le_2^\ell \   b_1 \nonumber\\
    q-2-a_1 \ &\le_2^\ell \   b_1 \nonumber\\
    \vdots \ &\vdots \   \vdots \nonumber\\
    q-s-a_1 \ &\le_2^\ell \   b_1
  \label{eq:mult-5}
  \end{align}
  is at most $2\cdot 3^\ell\cdot \left( 4/3 \right)^{\ell_s}=2\cdot 3^\ell\cdot s^{\log_2(4/3)}$.
\label{lem:mult-3}
\end{lemma}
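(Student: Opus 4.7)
My plan is to reformulate the conditions in terms of cyclic windows of length $s$ intersecting downsets, and then bound the count for each $b_1$ separately using the block structure at scale $s$.

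Setting $a_1' := q - 1 - a_1$ turns the conditions in \eqref{eq:mult-5} into the single statement: there exists $j \in \{0, 1, \dots, s - 1\}$ with $a_1' - j \le_2 b_1 \pmod{q}$. Since both $a_1'$ and $b_1$ lie in $\{0, \dots, q - 1\}$, $\le_2^\ell$ here coincides with $\le_2$. Let $W(a_1') := \{a_1' - s + 1, \dots, a_1'\} \subseteq \mathbb{Z}/q\mathbb{Z}$ denote the length-$s$ cyclic window ending at $a_1'$, and let $S_b := \{x \in \mathbb{Z}/q\mathbb{Z} : x \le_2 b\}$ denote the downset of $b$. Then the quantity I want to bound is $\sum_{a_1'} |\{b_1 : W(a_1') \cap S_{b_1} \ne \emptyset\}|$, which by switching the order of summation equals $\sum_{b_1} N(b_1)$, where $N(b_1) := |\{a_1' : W(a_1') \cap S_{b_1} \ne \emptyset\}|$.

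Next I fix $b_1$ and bound $N(b_1)$. Write $b_1 = b_1^H s + b_1^L$ and $a_1' = v^H s + v^L$ with $b_1^L, v^L \in \{0, \dots, s - 1\}$. Since $s = 2^{\ell_s}$, bit-dominance factors through this block decomposition: writing $x = x^H s + x^L$, one has $x \in S_{b_1}$ iff $x^H \in S_{b_1^H}$ and $x^L \in S_{b_1^L}$. The window $W(a_1')$ is either exactly block $v^H$ (when $v^L = s - 1$) or straddles blocks $v^H - 1$ and $v^H$ (when $v^L < s - 1$), covering the last $s - 1 - v^L$ positions of block $v^H - 1$ and the first $v^L + 1$ positions of block $v^H$. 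Using that $0 \in S_{b_1^L}$ always and $\max S_{b_1^L} = b_1^L$, a brief case analysis shows that $W(a_1') \cap S_{b_1} \ne \emptyset$ iff $v^H \in S_{b_1^H}$, or else $(v^H - 1) \bmod (q/s) \in S_{b_1^H}$ and $v^L < b_1^L$. A union bound then yields $N(b_1) \le m(b_1^H) \cdot s + m(b_1^H) \cdot b_1^L = m(b_1^H)(s + b_1^L)$, where $m(b_1^H) := |S_{b_1^H}| = 2^{\text{popcount}(b_1^H)}$.

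Finally, summing: $\sum_{b_1^H = 0}^{q/s - 1} m(b_1^H) = 3^{\ell - \ell_s}$ (each of the $\ell - \ell_s$ high bits independently contributes a factor of $1 + 2$), while $\sum_{b_1^L = 0}^{s - 1} (s + b_1^L) = s^2 + s(s-1)/2 = s(3s - 1)/2 \le 2 s^2$. Multiplying these, I get $\sum_{b_1} N(b_1) \le 3^{\ell - \ell_s} \cdot 2 s^2 = 2 \cdot 3^\ell \cdot (4/3)^{\ell_s}$, since $s^2 = 4^{\ell_s}$, which is the claimed bound. The main obstacle will be making the case analysis airtight, particularly the cyclic wrap-around in the definition of $W(a_1')$ when $v^H = 0$; but since I only need a union bound, I can afford to be generous about overlaps between the two disjuncts, and the argument should go through cleanly. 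Note that this beats the naive approach of applying the $s = 1$ count (which is $3^\ell$) once per $k$ via union bound, because it treats all $s$ shifts at once through the block decomposition rather than summing them separately.
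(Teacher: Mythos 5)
Your proof is correct and follows the same core approach as the paper's: both arguments split $a_1' = q-1-a_1$ and $b_1$ into their high $\ell-\ell_s$ bits and low $\ell_s$ bits, use the observation that the $s$ consecutive residues $a_1', a_1'-1, \dots, a_1'-s+1 \pmod q$ occupy at most two consecutive high-blocks, and then count pairs via $\le_2$-dominance at the high-block level before multiplying out over the low bits. Your version does somewhat finer bookkeeping within the low block (distinguishing which portion of the window lies in block $v^H$ versus block $v^H - 1$, and using $\max S_{b_1^L}=b_1^L$ to constrain $v^L$), which yields a marginally sharper constant than the paper's crude $4^{\ell_s}$ multiplier, but this extra precision is not needed for the stated bound.
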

\begin{proof}
  Suppose we write the numbers $(q-1-a_1\mod q),(q-2-a_1\mod q),\dots,(q-s-a_1\mod q)$ in binary with $\ell$ digits (possibly with leading zeros).
  As these number span $2^{\ell_s}$ consecutive integers mod $q$, when written in this binary form, their most significant $\ell-\ell_s$ coordinates take on at most 2 values.
  Let $a_2 = \floor{\frac{(q-1-a_1\mod q)}{2^{\ell_s}}}$ and $b_2=\floor{\frac{b_1}{2^{\ell_s}}}$ so that $a_2,b_2\in\{0,\dots,2^{\ell-\ell_s}-1\}$, and $a_2$ and $b_2$ are the most significant $\ell-\ell_s$ coordinates of $(q-1-a_1\mod q)$ and $b_1$, respectively, when written in $\ell$-digit binary.
  Then if one of the equations of \eqref{eq:mult-5} is true, then we must have either $a_2\le_2 b_2$ or $a_2-1\le_2 b_2$. This gives at most $2\cdot 3^{\ell-\ell_s}$ choices for the pair $(a_2,b_2)$.
  Given $a_2$ and $b_2$, there are $2^{\ell_s}$ choices for each of $a_1$ and $b_1$, for a total of at most $2\cdot 3^{\ell-\ell_s}\cdot 4^{\ell_s}$ solutions to \eqref{eq:mult-5}.
\end{proof}
\begin{lemma}
  \label{lem:mult-5}
  Let $r=2^{\ell_r}$, $s=2^{\ell_s}$ and $q=2^\ell$ with $\ell_r,\ell_s\in\{1,\dots,\ell-1\}$.
  The number of $(q,r,rq-s)$-good monomials is at least $\binom{r+1}{2}4^\ell-3rs^{\log_2(4/3)} \cdot 3^\ell$.  
\end{lemma}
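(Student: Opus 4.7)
The plan is to upper bound the number of type-$r$ monomials failing to be $(q,r,rq-s)$-good by $3r \cdot s^{\log_2(4/3)} \cdot 3^\ell$. Combined with the count of $\binom{r+1}{2}q^2 = \binom{r+1}{2}\cdot 4^\ell$ total type-$r$ monomials, this yields the lemma. First, I would translate badness into a combinatorial condition: restricting $M_{a,b}(X,Y) = X^aY^b$ to a line $(T,\alpha T+\beta)\in\mathcal{L}$ yields $T^a(\alpha T+\beta)^b = \sum_{k=0}^b \binom{b}{k}\alpha^k\beta^{b-k}T^{a+k}$, of degree $a+b \le (r+1)q-2 < 2qr-s-r$ (using $s\le q/2$ and $r\ge 2$). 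By Lemma~\ref{lem:mult-2}, the unique polynomial $B_{\alpha,\beta}$ of degree at most $qr-1$ equivalent up to order $r$ agrees with the restriction in the coefficients of $T^j$ for $j\in\{qr-s,\ldots,qr-1\}$. Combined with Lucas's theorem (Proposition~\ref{prop:lucas}) in characteristic $2$, $M_{a,b}$ is bad if and only if there exists $s'\in\{1,\ldots,s\}$ with $qr-s'-a\in[0,b]$ and $(qr-s'-a) \le_2 b$.

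Next, I would decompose $a=a_2q+a_1$ and $b=b_2q+b_1$ with $a_1, b_1\in\{0,\ldots,q-1\}$ and $a_2+b_2 \le r-1$ (the type-$r$ constraint). Writing $i_{s'}:=qr-s'-a$ in base $q$, the high chunk $\lfloor i_{s'}/q\rfloor$ is $r-1-a_2$ when $s'+a_1 \le q$ (no-borrow) and $r-2-a_2$ when $s'+a_1>q$ (borrow, valid only for $a_2\le r-2$), while the low chunk $i_{s'}\bmod q$ is $q-s'-a_1$ or $2q-s'-a_1$, respectively. So $i_{s'}\le_2 b$ factors as (high chunk $\le_2 b_2$) AND (low chunk $\le_2 b_1$). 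The valid $(a_2, b_2)$ pairs for which some high-chunk condition can hold (combined with $a_2+b_2\le r-1$) partition into three cases: (i) $a_2+b_2=r-1$ with $a_2$ odd (only no-borrow high holds; $r/2$ pairs); (ii) $a_2+b_2=r-1$ with $a_2$ even (both high conditions hold; $r/2$ pairs, using that $c \le_2 c+1$ if and only if $c$ is even, together with $r$ being a power of $2$); (iii) $a_2+b_2=r-2$ with $b_2=r-2-a_2$ (only borrow high holds; $r-1$ pairs). All remaining $(a_2,b_2)$ fail both high-chunk conditions and contribute nothing.

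For Cases (i) and (ii), the residual condition on $(a_1,b_1)$---that some low chunk is $\le_2 b_1$---is precisely the hypothesis of Lemma~\ref{lem:mult-3} applied to the $s$ consecutive values $(q-s'-a_1)\bmod q$ for $s'\in\{1,\ldots,s\}$, giving at most $2\cdot 3^\ell\cdot s^{\log_2(4/3)}$ pairs per $(a_2,b_2)$. For Case (iii), the borrow low condition requires $a_1>q-s$, so I would use the elementary bound of $(s-1)q$ pairs per $(a_2,b_2)$. Summing over cases, the total bad count is at most $2r\cdot 3^\ell s^{\log_2(4/3)} + (r-1)(s-1)q$.

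Finally, a short calculation finishes the job. Since $\ell_s\le\ell-1$, we have $sq/\bigl[3^\ell s^{\log_2(4/3)}\bigr] = (3/2)^{\ell_s-\ell} \le 2/3$, so $(s-1)q < sq \le (2/3)\cdot 3^\ell s^{\log_2(4/3)}$. Thus $(r-1)(s-1)q < r\cdot 3^\ell s^{\log_2(4/3)}$ and the total bad count is at most $3r\cdot s^{\log_2(4/3)}\cdot 3^\ell$, as needed. The main obstacle is the partition of $(a_2,b_2)$: a naive union bound over all $2r-1$ relevant pairs multiplied by the Lemma~\ref{lem:mult-3} bound yields the looser constant $4r-2$, and obtaining the tighter $3r$ requires treating Case (iii) separately with the elementary $(s-1)q$ bound and using $\ell_s\le\ell-1$ to absorb it into the Case (i) + (ii) contribution.
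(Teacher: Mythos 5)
Your proof is correct and follows essentially the same route as the paper: split on $a_2+b_2$, apply Lemma~\ref{lem:mult-3} to the $s$ consecutive residues $(q-s'-a_1)\bmod q$ when $a_2+b_2=r-1$, and dispose of $a_2+b_2\le r-2$ by an elementary count. The one difference is cosmetic: for $a_2+b_2=r-2$ the paper bounds the relevant $(a_1,b_1)$ by counting $a_1+b_1\ge 2q-s$ (yielding $\le (r-1)\binom{s}{2}\le rs^2/2$), whereas you use only the borrow condition $a_1>q-s$ (yielding $(r-1)(s-1)q$); your looser bound is still absorbed into the $3r$ constant via the hypothesis $\ell_s\le\ell-1$, and your parity split of $a_2+b_2=r-1$ into cases (i)/(ii) is sound but unused, since Lemma~\ref{lem:mult-3} is applied to all $r$ pairs uniformly regardless of which high-chunk condition activates.
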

\begin{proof}
  The number of type-$r$ monomials is $\binom{r+1}{2}q^2 = \binom{r+1}{2}4^\ell$.
  A monomial $M_{a,b}$ is $(q,r,rq-s)$-good if, for every $\alpha,\beta\in\mathbb{F}_q$, we have
  \begin{align}
    M_{a,b,\alpha,\beta}(T) \ \defeq \ T^a(\alpha T+\beta)^b = \sum_{i=0}^{b} \alpha^i \beta^{b-i} T^{a+i}\binom{b}{i}.
  \label{}
  \end{align}
  can be represented as a polynomial of degree less than $rq-s$.
Next, we apply Lemma~\ref{lem:mult-2}, which says that there is a unique polynomial $B(T)$ so that $\deg(B) \leq rq - 1$ so that $B(T) \equiv_r M_{a,b,\alpha,\beta}(T)$, and further that all of the coefficients $[T^c]B(T)$ for $\deg(M_{a,b,\alpha,\beta}) - (qr -r ) < c < qr$ are equal to the corresponding coefficient of $B(T)$.  
As $M_{a,b}$ is type $r$, we have $\floor{a/q}+\floor{b/q}<r$, so the degree of the polynomial $M_{a,b,\alpha,\beta}$ is at most $a+b\le (r+1)q-2$, and 
\[ ((r+1)q - 2) - (qr -r ) = r + q -2 < qr - s \]
for any allowed choice of $q,r,s$, so $[T^c]B(T) = [T^c] M_{a,b,\alpha,\beta}(T)$ for all $c$ so that
\[ qr - s \leq c \leq qr. \]
Thus, to show that $B(T)$ has degree less than $qr - s$, it suffices to show that 
the coefficients of $T^{qr-s}, T^{qr-s+1},\dots,T^{qr-1}$ in $M_{a,b,\alpha,\beta}$ are all zero.

  Write $a=a_0q+a_1$ and $b=b_0q+b_1$ where $a_0+b_0\le r-1$ and $0\le a_1,b_1\le q-1$.
  Note that if $a_0+b_0 < r-1$, then for $s'=1,\dots,s$ coefficient $[T^{rq-s'}]M_{a,b,\alpha,\beta}$ is always zero except possibly when $a_0+b_0=r-2$ and $a_1+b_1\ge 2q-s$.
  This can happen for at most $\frac{rs^2}{2}$ pairs $(a,b)$.
  Hence, for $a_0+b_0<r-1$, there are $\le \frac{rs^2}{2}$ bad monomials $(a,b)$.

  Now assume $a_0+b_0=r-1$.
  For $s'=1,\dots,s$, the coefficient of $T^{rq-s'}$ in $T^{a}(\alpha T+\beta)^{b}$ is 0 if $rq-s' < a$ or $a+b < rq-s'$.
  Otherwise, the coefficient is
  \begin{align}
    \alpha^{rq-s'-a}\beta^{b-rq+s'+a}\binom{b}{rq - s' - a}
    \ &= \  \alpha^{rq-s'-a}\beta^{b-rq+s'+a}\binom{b_0q+b_1}{b_0q+q-s'-a_1}.
  \label{}
  \end{align}
  By Proposition~\ref{prop:lucas}, the binomial coefficient is nonzero (mod 2) if and only if $b_0q + q-s'-a_1\le_2 b_0q+b_1$, which, as $q$ is a power of 2, happens only if $q-s'-a_1\le_2^\ell b_1$.
  Hence, if $a_0+b_0=r-1$, the monomial $M_{a,b}$ is $(r,rq-s)$-bad only if some $s'=1,\dots,s$ satisfies $q-s'-a_1\le_2^\ell b_1$.
  Hence, by Lemma~\ref{lem:mult-3}, for a fixed $a_0,b_0$ with $a_0+b_0=r-1$, there are at most $2s^{\log_2(4/3)}3^\ell$ bad monomials $M_{a,b}$, so there are at most $r\cdot s^{\log_2(4/3)}3^\ell$ bad monomials $M_{a,b}$ over all $a_0,b_0$ with $a_0+b_0=r-1$.
  As we showed, there are at most $\frac{rs^2}{2}$ bad monomials when $a_0+b_0<r-1$.
  Hence, there are at least $\binom{r+1}{2}4^\ell-2rs^{\log_2(4/3)}3^\ell-\frac{rs^2}{2} \ge \binom{r+1}{2}q^2 - 3rs^{\log_2(4/3)}q^{\log_2(3)}$ good monomials, as desired.
\end{proof}
Lemma~\ref{lem:mult-5} and Lemma~\ref{lem:good} together imply Corollary~\ref{cor:mult-6}, which in turn implies the informal result stated at the beginning of the section.

\begin{corollary}
  \label{cor:mult-6}
  Let $r=2^{\ell_r}$, $s=2^{\ell_s}$ and $q=2^\ell$ with $\ell_r,\ell_s\in\{1,\dots,\ell-1\}$.
  A $(q,r,rq-s)$ lifted multiplicity code has rate at least $1-6r^{-1}s^{\log_2(4/3)}q^{\log_2(3/4)}$.
\end{corollary}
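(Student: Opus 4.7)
The plan is a direct chain: Lemma~\ref{lem:mult-5} bounds the number of $(q,r,rq-s)$-good monomials from below, and Lemma~\ref{lem:good} converts this count into a rate lower bound, so the only remaining work is to simplify the resulting expression into the claimed form. The hypotheses on $r$, $s$, $q$ in the corollary match those of Lemma~\ref{lem:mult-5} exactly, so I can invoke it without further preparation.

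Concretely, I would begin by applying Lemma~\ref{lem:good}, which states that the rate is at least the ratio of the number of good monomials to $\binom{r+1}{2} q^2$. Substituting the lower bound of Lemma~\ref{lem:mult-5} gives
\[ \text{rate} \;\geq\; \frac{\binom{r+1}{2} 4^\ell - 3 r s^{\log_2(4/3)} 3^\ell}{\binom{r+1}{2} q^2} \;=\; 1 - \frac{3 r s^{\log_2(4/3)} 3^\ell}{\binom{r+1}{2} 4^\ell}, \]
where I have used $q^2 = 4^\ell$ in the denominator.

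The next step is to eliminate $\ell$ in favor of $q$. Since $q = 2^\ell$, one has $3^\ell = q^{\log_2 3}$ and $4^\ell = q^2$, so $3^\ell/4^\ell = q^{\log_2(3) - 2} = q^{\log_2(3/4)}$. Combined with $\binom{r+1}{2} = r(r+1)/2$, the error term becomes
\[ \frac{3 r s^{\log_2(4/3)} q^{\log_2(3/4)}}{r(r+1)/2} \;=\; \frac{6 s^{\log_2(4/3)} q^{\log_2(3/4)}}{r+1}. \]
Since $r + 1 \geq r$, this is at most $6 r^{-1} s^{\log_2(4/3)} q^{\log_2(3/4)}$, which yields exactly the advertised rate bound.

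I do not anticipate any real obstacle: all the substantive combinatorics is packaged into Lemma~\ref{lem:mult-5}, and what is left is purely bookkeeping. The only points worth a second look are ensuring that the identities $3^\ell = q^{\log_2 3}$ and $4^\ell = q^2$ are applied with the correct signs in the exponent (so that the error term carries $q^{\log_2(3/4)}$ rather than $q^{\log_2(4/3)}$), and that the factor of $r$ in the numerator of the error term is correctly cancelled by one factor of $r$ from $\binom{r+1}{2}$ to produce the $r^{-1}$ appearing in the final statement.
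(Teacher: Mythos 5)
Your proposal is correct and matches the paper's intended derivation exactly: the paper states that Lemma~\ref{lem:mult-5} and Lemma~\ref{lem:good} together imply the corollary, and your algebra (substituting the good-monomial count, rewriting $3^\ell/4^\ell = q^{\log_2(3/4)}$, and relaxing $r+1$ to $r$) is precisely the bookkeeping needed to close that gap.
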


\begin{remark}
  We apply Corollary~\ref{cor:mult-6} for $r=s\le q$, giving that a lifted multiplicity code of rate at least $1-6r^{\log_2(2/3)}q^{\log_2(3/4)}$.
  By comparison \cite{KSY14}, a 2-variate multiplicity code of order $r$ evaluations of degree at most $rq-r$ polynomials over $\mathbb{F}_q$ has rate $\frac{\binom{rq-r+2}{2}}{\binom{r+1}{2}q^2} \le 1-\Omega(\frac{1}{r})$, which is smaller than the rate of lifted multiplicity codes for $r\ll q$.
\end{remark}
\begin{remark}[Quantitative comparison to \cite{Wu15}]\label{rem:Wu15}
The work \cite{Wu15} also studies lifted multiplicity codes, but focuses on a different parameter regime than we focus on here (where $t$ is large, rather than $t \leq \sqrt{N}$).  Perhaps because they focus on a different parameter regime, the approach of \cite{Wu15} does not yield any nontrivial results in our parameter regime, and consequently our analysis of lifted multiplicity codes is much stronger. 

  For example, for degree $d=rq-r$ codes, \cite{Wu15} bounds%
\footnote{
    The details are as follows: using some notation from \cite{Wu15}, for a rate $1-\alpha$ code in our parameter regime, $n=2$ variables, a prime $p=2$, a parameter $b=\ceil{\log_p n}+1 = 2$, $q=p^\ell$, $r=p^{\ell_r}$ (they use $m$ for $r$, $s$ for $\ell$, and $t$ for $\ell_r$), $\alpha_t = \frac{\alpha}{1 - \frac{N_n(p^{{\ell_r}-2})}{N_n(p^{\ell_r})}} = \frac{\alpha}{1-\frac{N_2(r/4)}{N_2(r)}} \approx \frac{\alpha}{1-1/16} = \Theta(\alpha)$ (here $N_2(r) \defeq \binom{r+1}{2}$, and we assume $r\gg 1$), $c = bp^{bn}\ln\frac{1}{\alpha_t} +{\ell_r} = 32\ln\frac{1}{\alpha_t}+\ell_r$, $d=(1-p^{-c})rq$.
    In our setting, we choose $d=rq-r$, which requires $p^{c} = q$, so $c=\ell$, so $\ell-\ell_r = 32\ln\frac{1}{\alpha_t} = 32\ln 2\cdot \log\frac{1}{\alpha} -O(1)$.
    Thus, $\alpha = \Theta(2^{(\ell_r-\ell)/(32\ln 2)}) = (\frac{r}{q})^{1/(32\ln2)}$.
  }
the rate of the code below by $1-\Theta(\frac{r}{q})^{1/(32\ln2)}$.
This is a weaker bound than the straightforward bound of $1 - q^{\log_2(3/4)}$ sketched at the beginning of this section, and significantly weaker than our bound in Corollary~\ref{cor:mult-6} of $1 - 6r^{\log_2(2/3)}q^{\log_2(3/4)}$ for all $r$ and $q$.  Moreover, for $r \gg q^{1/23}$, the bound of \cite{Wu15} is even weaker than the lower bound on the rate of (non-lifted) multiplicity codes, which is $1 - \Omega(1/r)$.
\end{remark}

\begin{remark}[The value of bivariate lifts]\label{rem:bivariate}
In addition to likely giving better bounds than $m$-variate lifts (see Why only bivariate lifts? in Section~\ref{sec:related}), another reason that we study only bivariate lifts in this paper is that it makes the computations much more tractable.  In the proof of Lemma~\ref{lem:mult-5}, we study $M_{a,b,\alpha,\beta}(T) = (T^a)(\alpha T + \beta)^b$, and expand out the terms to apply Lucas's theorem.  If we were to consider, say, trivariate lifts, we would have to expand expressions of the form $(T^a)(\alpha T + \beta)^b(\gamma T + \delta)^c$, and it would become more complicated to keep track of the coefficients on various powers.  Analyzing $m$-variate lifts would become more complicated still.  In particular, it seems harder to get as tight a bound on the codimension of the code for $m$-variate lifts for $m > 2$ as we are able to get for $m=2$.  Given that we are already able to obtain good codes for bivariate lifts, we restrict our attention to this simpler case.
\end{remark}

\section{Disjoint repair groups of lifted multiplicity codes}\label{sec:drgp}
Finally, we prove Theorem~\ref{thm:main}, which we repeat below.
\begin{theorem*}[Theorem~\ref{thm:main}, restated]
  Let $r=2^{\ell_r}$ and $q=2^\ell$ with $\ell_r < \ell$ and $\mathcal{C}$ be the $(q,r,rq-r)$ lifted multiplicity code.
  \begin{itemize}
  \item The length of the code is $q^2$.
  \item The rate of the code is at least $1 - 6r^{\log_2(2/3)}q^{\log_2(3/4)}$.
  \item The code has the $q/r$-disjoint repair group property.
  \end{itemize}
\end{theorem*}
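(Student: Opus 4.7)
The length claim is immediate from the definition of $\eval_{q,r}$, which produces one coordinate for each $\mathbf{x}\in\F_q^2$, and the rate bound is an immediate specialization of Corollary~\ref{cor:mult-6} with $s=r$, giving rate at least $1-6r^{-1}r^{\log_2(4/3)}q^{\log_2(3/4)} = 1-6r^{\log_2(2/3)}q^{\log_2(3/4)}$. So all the work is in establishing the $q/r$-disjoint repair group property.

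Fix any point $\mathbf{x}=(\gamma,\delta)\in\F_q^2$. The plan is to use the lines of $\mathcal{L}$ through $\mathbf{x}$ as the building blocks for the repair groups. There are exactly $q$ such lines, one for each slope $\alpha\in\F_q$, namely $L_\alpha(T) = (T,\alpha(T-\gamma)+\delta)$. Partition them arbitrarily into $q/r$ groups $G_1,\dots,G_{q/r}$ of size $r$, and for each $j$ let $S_j$ be the union of the points on the lines in $G_j$ other than $\mathbf{x}$, so $|S_j|=r(q-1)$. Distinct lines through $\mathbf{x}$ meet only at $\mathbf{x}$, so $S_1,\dots,S_{q/r}$ are pairwise disjoint subsets of $\F_q^2\setminus\{\mathbf{x}\}$.

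It remains to exhibit a function $f_j$ recovering $P^{(<r)}(\mathbf{x})$ from $\bigl(P^{(<r)}(\mathbf{y})\bigr)_{\mathbf{y}\in S_j}$ for every polynomial $P$ whose evaluation lies in the code. I would proceed in two steps. First, for each line $L=L_\alpha\in G_j$, Lemma~\ref{lem:local-1} lets us compute $P_L^{(<r)}(t)$ at each of the $q-1$ values of $t$ corresponding to the points of $L\setminus\{\mathbf{x}\}$, using only the codeword values on those points. By definition of the $(q,r,rq-r)$ lifted multiplicity code, $P_L(T)\equiv_r Q(T)$ for some $Q$ of degree less than $rq-r=r(q-1)$. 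Since $P_L\equiv_rQ$, those computed derivatives coincide with $Q^{(<r)}$ at the $q-1$ points; and a polynomial of degree less than $r(q-1)$ is uniquely (and efficiently) determined by its first $r$ derivatives at $q-1$ distinct points, because two such polynomials agreeing on those derivatives would differ by a multiple of $\prod_{t} (T-t)^r$, a polynomial of degree $r(q-1)$. Recovering $Q$ therefore yields $P_L^{(<r)}(\gamma)=Q^{(<r)}(\gamma)$. Second, having $P_L^{(<r)}(\gamma)$ for each of the $r$ distinct lines in $G_j$, Corollary~\ref{lem:local-2} recovers $P^{(<r)}(\mathbf{x})$ by inverting a Vandermonde system in the slopes $\alpha$.

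I do not anticipate a genuine obstacle here, since the construction is engineered so the parameters line up: the degree bound $d=rq-r$ is chosen so that the $r(q-1)$ derivative constraints provided by the $q-1$ off-$\mathbf{x}$ points of a line exactly match the dimension of the space of allowed representatives $Q$, and $r$ distinct slopes is the exact number needed to invert the Vandermonde of Lemma~\ref{lem:local-1}. The only small care is ensuring all lines we use are in $\mathcal{L}$ (ruling out the vertical direction), which is already built into the restricted definition of $\mathcal{L}$ used throughout Section~\ref{sec:background}.
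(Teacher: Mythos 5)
Your proposal is correct and follows essentially the same route as the paper's proof: partition the $q$ lines of $\mathcal{L}$ through $\mathbf{x}$ into $q/r$ groups of $r$, use Lemma~\ref{lem:local-1} to get the order-$<r$ derivatives of each line restriction at the off-$\mathbf{x}$ points, interpolate the degree-$<r(q-1)$ representative, and invert the Vandermonde system of Corollary~\ref{lem:local-2}. If anything you are slightly more careful than the paper's terse writeup, correctly distinguishing $P_L$ from its low-degree representative $Q$ and noting that only the order-$<r$ derivative information (which the two share) is used, rather than asserting that $P_L$ itself has low degree.
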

\begin{proof}
  The first item follows from the definition of $\mathcal{C}$, and the second item is by Corollary~\ref{cor:mult-6}.
  To see the third item, we observe that, given a point $(\gamma,\delta)\in\mathbb{F}_q^2$, lines $L_1,\dots,L_r$ passing through $(\gamma,\delta)$, and $P^{(<r)}(\mathbf{y})$ at all points $\mathbf{y}$ on the lines $L_1,\dots,L_r$ except $(\gamma,\delta)$ itself, we can (efficiently) recover 
$P^{(<r)}(\gamma, \delta)$.
  This guarantees the $q/r$-disjoint repair group property, because we can group the $q$ lines of $\mathcal{L}$ of the form $L(T) = (T, \alpha T + \beta)$ passing through $(\gamma,\delta)$ arbitrarily into groups of $r$, giving $q/r$ disjoint repair groups.
  For any line $L_k$, the polynomial $P_{L_k}(T)$ has degree at most $rq-r-1$, as $P$ is $(q,r,qr-r)$-good.
  By taking linear combinations of directional derivatives (Lemma~\ref{lem:local-1}), we can efficiently compute $P\ind{i}_{L_k}(\gamma')$ for every $i=0,\dots,r-1$, every $k=1,\dots,r$, and every $\gamma'\neq \gamma$.
  We can compute $P_{L_k}(T)$ using a generalization of polynomial interpolation.
  This can be done in $O(D\log D)$ time, where $D<rq$ is the degree of the polynomial (see e.g. \cite{Chin76})
  Hence, by Corollary~\ref{lem:local-2}, from $P_{L_1}(T),\dots,P_{L_k}(T)$, we can efficiently compute $P\ind{i,j}(\gamma,\delta)$ for all $i,j$ with $0\le i+j\le r-1$.
\end{proof}

\section{Conclusion}\label{sec:conclude}
We conclude with some open questions.
\begin{enumerate}
\item We have shown that lifted multiplicity codes with redundancy $O(t^{0.585} \sqrt{N})$ have the $t$-DRGP for a range of $t \leq \sqrt{N}$.  However, we do not know of any general lower bounds when $t\in(1,\sqrt{N})$ beyond the lower bound for $t=2$, which implies that the redundancy must be at least $\Omega(\sqrt{N})$ for any $t$.  
When $t\ge \sqrt{N}$, there is a stronger redundancy lower bound of $\Omega(t)$, which holds simply because a code with the $t$-DRGP must have Hamming distance at least $t$.
Thus, it is an open question whether or not our bound is tight or whether one can do better.
\item Lifted multiplicity codes display better locality for the $t$-DRGP problem for $t \leq \sqrt{N}$; it is a natural question to ask whether they can be used for larger $t$, and in particular whether they could lead to improved constructions of locally correctable codes. 
In particular, it would be interesting if lifted multiplicity codes could qualitatively out-perform (un-lifted) multiplicity codes as high-rate LCCs, for example by maintaining the high rate while achieving sub-polynomial query complexity.\footnote{As noted in the introduction, 
the work \cite{Wu15} showed the lifted multiplicity codes are good LCCs with lower-order derivatives than were required by the (un-lifted) multiplicity codes of \cite{KSY14}, but it does not show how to improve the query complexity to sub-polynomial.}
We note that for the LCC problem, one typically does not care about pinning down the rate, so long as it is close to $1$, instead focusing on the query complexity.  In contrast, in this work, we have focused on pinning down the rate much more precisely.

\item Related to the above, it would be natural to understand the rate and locality of lifted multiplicity codes over more than two variables.
\item The alphabet size of lifted multiplicity codes is $q^{\binom{r+1}{2}}$, which, if the multiplicity is $r=q^\alpha$ for a constant $\alpha>0$, is exponential type in the code length $q^2$. In practical applications, a smaller alphabet size is desirable. It would be interesting to achieve the results of Corollary~\ref{cor:main} with a code whose length grows independently of the alphabet size. 
\item 
In this paper, we studied the $t$-DRGP locality property, which requires  that each symbol has many disjoint repair groups. Another common notion of locality is an Locally Recoverable Code (LRC) with locality $d$, which requires that each symbol has one repair group of size at most $d$.   These two notions are combined in the notion of an LRC with locality $d$ and availability $t$ (see, e.g. \cite{TB_mult_rec, WZ14, RPDV14}), combines these two notions.  This requires that each symbol have $t$ disjoint repair groups, each of size at most $d$. The techniques in this paper yield codes with locality $rq$ and availability $q/r$, where $r$ is the multiplicity. It would be interesting to construct codes with a better trade-off between locality and availability, possibly using lifting and/or multiplicity techniques.
\end{enumerate}

\section*{Acknowledgements} 
We thank Eitan Yaakobi for helpful conversations.
We thank Julien Lavauzelle for pointing out the reference \cite{Wu15}, for pointing out an error in an earlier version of this paper, and for suggesting the fourth open question.
We thank Nikita Polianskii for pointing out an error in an earlier version of this paper.
A previous version claimed that a lifted code is exactly the span of all good monomials, but in fact the span of all good monomials only forms a subset of the lifted code (see Remark~\ref{rem:err}).
This does not change our main result, as our lower bound on the number of good monomials still gives the same lower bound on the rate of the lifted code.
We thank anonymous reviewers for helpful comments on an earlier draft of this paper.

\bibliographystyle{alpha}
\bibliography{bib}

\appendix

\section{Proofs of polynomial facts}
\label{app:A}

\begin{proof}[Proof of Proposition~\ref{prop:deriv-2}]
  By part 2 of Proposition~\ref{prop:deriv-1},
  \begin{align}
    P\ind{i}(X) = \sum_{j_1+\cdots+j_r=i}^{} \prod_{k=1}^{r} D\ind{j_k}(X^q-X).
  \end{align}
  We have $D\ind{1}(X^q-X)=1$ (the field has characteristic 2).
  For $2\le i < q$, the $i$th derivative of $X^q-X$ is $\binom{q}{i}X^{q-i}$, which is 0, as $\binom{q}{i}$ is even by Proposition~\ref{prop:lucas}.
  The summand above is nonzero if and only if $j_1,j_2,\dots,j_r\le1$.
  When $i\le r$, this happens when $i$ of the $j_k$'s are 1 and $r-i$ are 0, which happens for $\binom{r}{i}$ choices of $(j_1,\dots,j_r)$.
  This gives $P\ind{i}(X)=\binom{r}{i}(X^q-X)^{r-i}$ for $0\le i\le r$.
  When $i>r$, some $j_k$ is at least 2, in which case $P\ind{r}(X)=0$ for $r<i<q$.
\end{proof}

\begin{proof}[Proof of Lemma~\ref{lem:local-1}]
  Let $\bfa_k$ denote the vector $(1,\alpha_k)$, and let $\bfb_k$ denote the vector $(0,\beta_k)$. 
  By assumption, we have that $\bfa_k\gamma + \bfb_k = (\gamma,\delta)$.
  By the definition of Hasse derivatives, we have, for all $k=1,\dots,r$
  \begin{align}
    P_{L_k}(T + Z)
    \ &= \  P(\bfa_k T + \bfb_k + \bfa_k Z) \nonumber\\
    \ &= \  \sum_{\bfi\in\mathbb{N}^2}^{} P\ind{\bfi}(\bfa_k T + \bfb_k)\cdot  (\bfa_k Z)^{\bfi} \nonumber\\
    \ &= \  \sum_{\bfi\in\mathbb{N}^2}^{} P\ind{\bfi}(\bfa_k T + \bfb_k) \cdot \bfa_k^\bfi Z^{\wt(\bfi)}\nonumber\\
    P_{L_k}(T+Z)
    \ &= \ \sum_{i\ge 0}^{} P\ind{i}_{L_k}(T) Z^i
  \label{}
  \end{align}
  Hence, for all $i\ge 0$ and $k=1,\dots,r$, we have
  \begin{align}
    P_{L_k}\ind{i}(T)  
    \ &= \ \sum_{\bfi:\wt(\bfi) = i}^{} P\ind{\bfi}(\bfa_k T + \bfb_k) \bfa_k^\bfi
  \label{}
  \end{align}
  By plugging in $T=\gamma$, we have for all $i\ge 0$ and $k=1,\dots,r$,
  \begin{align}
    P_{L_k}\ind{i}(\gamma)
     \ &= \ \sum_{\bfi:\wt(\bfi) = i}^{} P\ind{\bfi}(\gamma,\delta) \bfa_k^\bfi.
  \label{}
  \end{align} 
  Rewriting this in matrix form gives the desired result.
\end{proof}

\section{Lifted codes via dual codes}\label{app:dual}
It was shown in \cite{GuoKS13} that bivariate lifted parity-check codes over $\mathbb{F}_q$, where $q=2^\ell$, have co-dimension $3^\ell$.
Here, we give an alternative proof using dual codes.
The techniques in this proof are not directly related to the techniques that we used in the main body of the paper, but we found this alternative proof illuminating so we include it.

Let $q=2^\ell$.
Recall $\mathcal{L}$ is the set of lines expressible as $L(T)=(T,\alpha T + \beta)$ where $\alpha,\beta\in \mathbb{F}_q$.
One way to think about codes with locality is by considering their dual code.
If the code is a subset of $\mathbb{F}_q^{q\times q}$, then the dual code corresponds to lines of repair groups.
Given a line $L(T)$ in $\mathcal{L}$, define the corresponding dual codeword:
\begin{align}
  (c^\perp_L)_{ij} \defeq \left\{
  \begin{tabular}{ll}
  1 & $(i,j)=L(t)$ for some $t\in\mathbb{F}_q$\\
  0 & \text{o/w}\\
  \end{tabular}
  \right.
\end{align}
Let 
\begin{align}
  V_\mathcal{L} \ &\defeq \ \spn\left\{c^\perp_L:L\in \mathcal{L}\right\}.
\label{}
\end{align}
Note that $V_{\mathcal{L}}$ is spanned by $4^\ell$ elements, so the trivial bound on the dimension is $4^\ell$.
We give the following improved bound, matching the analysis of \cite{GuoKS13}.
\begin{lemma}
  \label{lem:dual-1}
  The subspace $V_{\mathcal{L}}$ has dimension at most $3^\ell$.
\end{lemma}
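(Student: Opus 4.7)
The plan is to identify $\mathbb{F}_q^{q\times q}$ with the quotient ring $\mathbb{F}_q[X,Y]/(X^q - X, Y^q - Y)$ via the evaluation map; this is an $\mathbb{F}_q$-linear bijection whose reduced polynomials (those with $\deg_X, \deg_Y < q$) have the natural monomial basis $\{X^j Y^i : 0 \le i, j < q\}$. Under this identification, I will show that every line indicator $c^\perp_L$ corresponds to a polynomial lying in the $\mathbb{F}_q$-span of only $3^\ell$ specific monomials, namely those $X^j Y^i$ whose exponents $i$ and $j$ have disjoint binary supports. This fixed subspace then contains $V_{\mathcal{L}}$, immediately giving $\dim V_{\mathcal{L}} \le 3^\ell$.

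The key polynomial identity is that in characteristic $2$ we have $u^{q-1} = 1$ for $u \ne 0$ and $u^{q-1} = 0$ for $u = 0$, so the indicator of the line $L(T) = (T, \alpha T + \beta)$ on $\mathbb{F}_q^2$ can be written as $\mathbb{1}_L(x, y) = 1 + (y + \alpha x + \beta)^{q-1}$. Expanding by the multinomial theorem,
\begin{align}
\mathbb{1}_L(x, y) \ = \ 1 + \sum_{\substack{i, j, k \ge 0 \\ i + j + k = q - 1}} \binom{q-1}{i,\,j,\,k}\, \alpha^j \beta^k\, X^j Y^i.
\end{align}
By Lucas's theorem (Proposition~\ref{prop:lucas}), $\binom{q-1}{i,j,k}$ is odd exactly when no carries occur in the base-$2$ addition $i + j + k = q - 1$. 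Since $q - 1$ has all $\ell$ of its bits equal to $1$, this happens iff each of those $\ell$ bit positions is set in precisely one of $i, j, k$; there are exactly $3^\ell$ such triples, and for each triple the binary supports of $i$ and $j$ are disjoint with $k$ uniquely determined by $i$ and $j$.

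Consequently every $\mathbb{1}_L$ lies in the $\mathbb{F}_q$-span of the fixed family of $3^\ell$ monomials $\{X^j Y^i : 0 \le i, j < q,\ \text{supports of } i \text{ and } j \text{ disjoint}\}$ (the constant monomial $X^0 Y^0$ corresponds to $(i, j, k) = (0, 0, q-1)$ and absorbs the leading $1$, contributing coefficient $1+\beta^{q-1}$). Since evaluation is an $\mathbb{F}_q$-linear bijection from reduced polynomials to $\mathbb{F}_q^{q\times q}$, the vectors $c^\perp_L$ all lie in a fixed $3^\ell$-dimensional $\mathbb{F}_q$-subspace, so $\dim V_{\mathcal{L}} \le 3^\ell$. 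The argument has no real obstacle: the only non-routine step is recognizing the closed form $\mathbb{1}_L = 1 + (y + \alpha x + \beta)^{q-1}$ and then reading off the surviving ``partition-type'' monomials via Lucas's theorem, which naturally produces the $3^\ell$ count matching the codimension bound from \cite{GuoKS13}.
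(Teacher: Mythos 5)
Your proof is correct. It takes a mildly different route from the paper's, and the difference is worth noting. The paper writes the indicator of $L$ as the evaluation of $P_L(X,Y) = \prod_{\beta \neq \beta_L}(\alpha_L X + \beta - Y)$, a product of $q-1$ linear factors, and then reads off the coefficient of $X^aY^b$ by a combinatorial expansion of that product; the key observation is that the coefficient carries a factor of $\binom{a+b}{a}$, and Lucas's theorem then restricts the nonzero monomials to the $3^\ell$ pairs $(a,b)$ with $a \le_2 a+b$. You instead observe that the same indicator has the closed form $1 + (Y+\alpha X+\beta)^{q-1}$ (which is in fact identical as a polynomial to the paper's $P_L$, since $\prod_{\gamma\neq 0}(\gamma - v) = v^{q-1}-1$ and char 2), apply the multinomial theorem directly, and invoke the multinomial version of Lucas on $\binom{q-1}{i,j,k}$ to see that the surviving monomials are exactly those where the bits of $q-1$ are partitioned among $i,j,k$. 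Both arrive at the same set of $3^\ell$ monomials (equivalently, $X^jY^i$ with $i$ and $j$ having disjoint binary supports), but your coefficient extraction is more direct — there is no awkward inner sum over choices of $\beta$'s as in the paper's equation for $[X^aY^b]P_L$ — so the argument is a bit cleaner even though the underlying mechanism (a Lucas-theorem count) is the same.
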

\begin{proof}
  A codeword $c^\perp_L$ is the evaluation of the following polynomial on $\mathbb{F}_q^{q\times q}$:
  \begin{align}
    P_L(X,Y)\ &\defeq \  \prod_{\beta\neq \beta_L}^{} (\alpha_L X + \beta - Y).
  \label{}
  \end{align}
  If $(X,Y)\notin L$, then the polynomial evaluates to 0 as $Y-\alpha_L X \neq \beta_L$, and otherwise it evaluates to
  \begin{align}
    \prod_{\beta\neq\beta_L}^{} (\beta-\beta_L) = \prod_{\beta\in\mathbb{F}_q^*}^{} \beta = 1.
  \end{align}
  For $a+b\ge q$, the coefficient of $X^aY^b$ in $P_L(X,Y)$ is 0.
  For $a+b\le q$, the coefficient of $X^aY^b$ in $P_L(X,Y)$ is
  \begin{align}
    \binom{a+b}{a}\alpha_L^a(-1)^b\sum_{\substack{\beta_1,\dots,\beta_{q-1-a-b}\in\mathbb{F}_q\\ \text{distinct}, \neq \beta_L}}^{} \prod_{j=1}^{q-1-a-b} \beta_j
    .
  \label{eq:coeffab}
  \end{align}
  This is because we first chose $a+b$ terms that contain $X$ or $Y$, then choose which terms are $X$ and which terms are $Y$, and this gives us $a$ many $\alpha_L$'s and $b$ many $-1$'s, and we sum over the choices of the $\beta$ terms that we choose.
  Hence, the only $a,b$ such that $[X^aY^b]P_L(X,Y)\neq0$ for any $L$ are the pairs $(a,b)$ such that $a+b\le q-1$ and $\binom{a+b}{a}\equiv 1\mod 2$.
  There are at most $3^\ell$ pairs by Proposition~\ref{prop:lucas}.
  It follows that the polynomials $P_L(X,Y)$ are spanned by $3^\ell$ monomials $X^aY^b$ with $\binom{a+b}{a}\equiv 1\mod 2$.
  Hence, the vector space $V_{\mathcal{L}}$ is spanned by $3^\ell$ dual codewords in $\mathbb{F}_q^{q\times q}$ and thus has dimension at most $3^\ell$.
\end{proof}
\end{document}